\numberwithin{figure}{section}
\newcommand{\nats}{\mathbb{N}}
\newcommand{\rats}{\mathbb{Q}}
\newcommand{\reals}{\mathbb{R}}
\newcommand{\preals}{\ensuremath{\reals_+}}
\DeclareMathOperator{\sunam}{\mathbf{S}}
\DeclareMathOperator{\wunam}{\mathbf{W}}
\DeclareMathOperator{\cond}{\mathbf{C}}
\newcommand{\cons}{\mathcal{K}}
\newcommand{\profs}{\mathcal{P}}
\newcommand{\votesits}{\mathcal{V}}
\newcommand{\votedists}{\mathcal{F}}
\newcommand{\elecs}{\mathcal{E}}
\DeclareMathOperator{\bis}{\beta}
\DeclareMathOperator{\simp}{\Delta}
\DeclareMathOperator{\R}{\mathcal{R}}
\DeclareMathOperator{\rk}{rk}
\DeclareMathOperator{\nummap}{\mathcal{N}}
\DeclareMathOperator{\distmap}{\mathcal{D}}
\newcommand{\hl}[1]{\textnormal{\textbf{#1}}} 
\newcommand{\commentred}[1]{\textcolor{red}{\textit{#1}}}
\theoremstyle{plain}
\newtheorem{defn}{Definition}[section]
\newtheorem{remark}[defn]{Remark}
\newtheorem{prop}[defn]{Proposition}
\newtheorem{cor}[defn]{Corollary}
\newtheorem{eg}[defn]{Example}
\begin{document}

\title{Distance rationalization of anonymous and homogeneous voting rules}
\author{Benjamin Hadjibeyli}
\address{ENS Lyon}
\email{benjamin.hadjibeyli@ens-lyon.org}
\author{Mark C. Wilson}
\address{University of Auckland}
\email{mcw@cs.auckland.ac.nz}

\begin{abstract}
The concept of distance rationalizability of voting rules has been explored in recent years by several authors. Roughly speaking, we first choose a \emph{consensus set} of \emph{elections} (defined via preferences of voters over candidates) for which the result is specified \emph{a priori} (intuitively, these are elections on which all voters can easily agree on the result). We also choose a measure of \emph{distance} between elections. The result of an election outside the consensus set is defined to be the result of the closest consensual election under the distance measure. 

Most previous work has dealt with a definition in terms of preference profiles. However, most voting rules in common use are anonymous and homogeneous. In this
case there is a much more succinct representation (using the voting simplex) of the inputs to the rule. This representation has been widely used in the voting literature, but rarely in the context of distance rationalizability. 

We show exactly  how to connect  distance rationalizability on profiles for anonymous and homogeneous rules to geometry in the simplex. We develop the connection for the important special case of votewise distances, recently introduced and studied by Elkind, Faliszewski and Slinko in several papers. This yields a direct interpretation in terms of well-developed mathematical concepts not seen before in the voting literature, namely Kantorovich (also called Wasserstein) distances and the geometry of 
Minkowski spaces. 

As an application of this approach, we prove some positive and some negative results about the decisiveness of distance rationalizable anonymous and homogeneous rules. The positive results connect with the recent theory of hyperplane rules, while the negative ones deal with distances that are not metrics, controversial notions of consensus, and the fact that the $\ell^1$-norm is not strictly convex.

We expect that the above novel geometric interpretation will aid the analysis of rules defined by votewise distances, and the discovery of new rules with desirable properties.
\end{abstract}

\subjclass{}
\keywords{social choice theory, collective decision-making, rankings, homogeneity,
anonymity, simplex, Wasserstein metric, Kantorovich distance, Earth Mover distance}

\thanks{MW thanks Elchanan Mossel and Miklos Racz for research hospitality and useful
conversations.}
\maketitle

\section{Introduction} \label{s:intro}

We are interested in the relation between two ways of describing voting rules (interpreted broadly), each of which has a geometric flavour.

The class of anonymous and homogeneous voting rules includes all rules used in practice, and most rules appearing in the research literature (Dodgson's rule is a notable exception). For such rules there is an obvious concise way to describe an input profile of preferences, using the \emph{vote simplex}. This approach goes back at least as far as Young \cite{Youn1975} and was extensively developed and popularized by Saari
\cite{Saar1994}. By allowing us to use geometric intuition, it aids in the analysis of many properties of anonymous and homogeneous voting rules.

The framework of \emph{distance rationalizability} is a useful way to organize the huge number of voting rules that have been introduced. By decomposing a rule into a \emph{consensus} and a notion of \emph{distance} to that consensus, the framework allows systematic derivation of axiomatic properties of the rule from those of its components. 
This kind of analysis has been  carried out recently by Elkind, Faliszewski and Slinko \cite{EFS2012, EFS2015} and the present authors \cite{HaWi2016}, 
following early work by Campbell, Lerer and Nitzan \cite{Nitz1981, LeNi1985, CaNi1986} and Meskanen and Nurmi \cite{MeNu2008}.

Until now, the two approaches described above have not been explicitly connected. Specific distance-based rules have indeed been studied in the simplex or permutahedron, notably by Zwicker \cite{Zwic2008, Zwic2008b, CDGL+2012}. However, a 
more general approach is lacking. As shown in \cite{HaWi2016}, the theory can be developed simultaneously for social choice rules and social welfare rules, and for very general distances and consensus notions, in a way that clarifies the relationship between the profile-based and simplex-based representations.

\subsection{Outline of paper and our contribution} 
\label{ss:contrib} 

In Section~\ref{s:defs} we cover the basic notation and terminology. 
A rule defined directly on the simplex is  automatically anonymous and homogeneous. Conversely, every anonymous and homogeneous rule  can instead be defined on the simplex. In Section~\ref{s:simplex} we show how the usual distance rationalizability approach on profiles connects with the geometric approach on the vote simplex. 

The methods of Zwicker and Saari, starting with the simplex or related geometric representations of the space of preference profiles and using usual Euclidean geometry, have led both to the discovery of some new anonymous and homogeneous rules and to improved analysis of some old ones. We give an example in Proposition~\ref{prop:$ell^p$} of some new rules defined in an intuitive way. However, as it seems that in order to find more interesting rules we must dig deeper and use less well-known consensuses or distances, we focus instead on the analysis of rules defined on profile space, using our geometric machinery.

Starting with an anonymous and homogeneous rule defined via distance rationalization on profile space, we consider the corresponding rule on the simplex and interpret it geometrically.
Abstractly, this is straightforward if we use a \emph{quotient distance}, a general construction which is nontrivial to compute in general. However, in the situation of the present paper, we can give simple explicit formulae (Proposition~\ref{prop:homog simple}). We focus in Section~\ref{ss:wasserstein} on the special case of $\ell^p$-votewise
distances, which have been shown by Elkind, Faliszweski and Slinko (and the present authors) 
to have many desirable properties. We sharpen further the description above to show that in this case, the quotient distance on the simplex is a \emph{Wasserstein} (also called \emph{Kantorovich}) distance, a concept widely used in probability theory and its applications in computer science. In particular when $p=1$ (the most natural case for voting) each such distance is induced by a norm, and we can interpret everything in terms of the geometry of finite-dimensional normed spaces (also called \emph{Minkowski spaces}). This provides a new perspective to the voting literature and suggests not only new voting rules defined using geometric intuition, but also a new geometric tool for the analysis of existing rules.

In particular, in Section~\ref{s:ties} we use the simplex representation to explore the decisiveness (how often it gives a unique winner) of a distance rationalizable rule. On the positive side, we give a sufficient condition in
Corollary~\ref{cor:hyperplane} for a rule  to be a
\emph{hyperplane rule} and thus admit a vanishingly small fraction of profiles where ties occur. For example, 
any rule defined using an $\ell^p$-votewise distance and the strong or weak
unanimity consensus satisfies this condition. On the negative side, we
show in Proposition~\ref{pr:norm} that ties can occur in a large fraction of profile space if we use $\ell^1$-votewise metrics, unless the notion of consensus is very well chosen. This sheds light on some common consensus notions and casts some doubt on that of Condorcet. In Section~\ref{s:future} we make some recommendations for desirable properties of consensus sets and distances.

The approach adopted here and in \cite{HaWi2016}, following \cite{EFS2015}, allows for systematic exploration of the space of aggregation rules and the construction of rules with guaranteed axiomatic properties. We expect that more insight into distance-based voting rules will be obtained by exploiting the deeper geometric connections developed here.

\section{Basic definitions} \label{s:defs}

We use standard concepts of social choice theory. Not all of these concepts have completely 
standardized names. 

\begin{defn}
\label{def:rankings}
We fix a finite set $C = \{c_1, c_2, \dots, c_m\}$ of \hl{candidates} and an infinite set $V^*= \{v_1, v_2, \dots, \}$ of potential  
\hl{voters}. For each $s$ with $1\leq s \leq m$, an
\hl{$s$-ranking} is a strict linear order of $s$ elements chosen from
$C$. The set of all $s$-rankings is denoted $L_s(C)$. When $s=m$ we
write simply $L(C)$. When $s=1$ we identify $L_1(C)$ with $C$ in the natural way. 
\end{defn}

\begin{defn}
\label{def:profiles}
A \hl{profile} is a function $\pi: V \to L(C)$ where $V\subset V^*$ is finite. 
We denote the set of all profiles with fixed $C$ and $V$ by $\profs(C,V)$ and the set of all profiles by $\profs(C)$. 
An \hl{election} is a triple $(C,V, \pi)$
with $\pi\in \profs(C,V)$. We denote the set of all elections with fixed $C$ and $V$  by
$\elecs(C, V)$, and the class of all elections by $\elecs$. 
\end{defn}

\begin{defn}
\label{def:rules}
A \hl{social rule of size $s$} is a function $R$ that takes each election $E = (C,V, \pi)$ to
a nonempty subset of $L_s(C)$. When there is a unique $s$-ranking chosen, the
word ``rule" becomes ``function". When $s=1$, we have the usual
\hl{social choice function}, and when $s=|C|$ the usual \hl{social welfare
function}.

For each subset $D$ of $\elecs$ we can consider a \hl{partial social rule with domain $D$} 
to be defined as above, but with domain restricted to $D$.
\end{defn}

\subsection{Consensus}
\label{ss:consensus}

Intuitively, a consensus is simply a socially agreed unique outcome on some set of
elections. We now define it formally.
\begin{defn}
\label{def:cons}
An $s$-\hl{consensus} is a  partial social function $\cons$ of
size $s$.  The domain $D(\cons)$ of 
$\cons$ is called an \hl{$s$-consensus set}
and is partitioned into the inverse images $\cons_r:=
\cons^{-1}(\{r\})$.

\end{defn}

Several specific consensuses have been described in the literature. We list a few important ones. 
Some have been discussed by previous authors only in the case $s=1$ but the definitions extend naturally. 
\begin{defn} 
\label{def:cons_eg} 
We use the following consensuses in this article.
\begin{itemize}
\item  We denote  by $\sunam^s$ the consensus $\cons$ for which $\cons_r$ is the election
in which all voters agree that $r$ is the ranking of the top $s$ candidates. 
When $s=|C|$, we simply write $\sunam$ (called the 
\hl{strong unanimity consensus}),
whereas when $s = 1$, for consistency with previous authors 
we denote it $\wunam$, the \hl{weak unanimity consensus}. 
\item The $1$-\hl{Condorcet} consensus $\cond$ has domain consisting of all elections for which a Condorcet winner exists. That is, there is a candidate $c$ (the Condorcet winner) such that for every other candidate $b$
a fraction strictly greater than $1/2$ of voters rank $c$ above $b$.

\end{itemize}
\end{defn}

\subsection{Distances}
\label{ss:metrics}

We require a notion of distance on elections. We aim to be as general as possible. 

\begin{defn}(distance)
\label{def:dist}
A \hl{distance} (or \hl{hemimetric}) on $\elecs$ is a function
$d:\elecs \times \elecs \to \preals \cup \{\infty\}$ that satisfies for all $x,y,z\in \elecs$
\begin{itemize}
\item $d(x,x) = 0$,
\item $d(x,z) \leq d(x,y) + d(y,z)$.
\end{itemize}
 A \hl{pseudometric} is a distance that also satisfies $d(x,y) = d(y,x)$.
A \hl{quasimetric} is a distance that also satisfies $d(x,y) = 0 \Rightarrow x = y$.
 A \hl{metric} is a distance that is both a quasimetric and a pseudometric.
We call a distance \hl{standard} if $d(E, E') = \infty$ whenever $E$ and $E'$ have
different sets of voters or candidates.
\end{defn}

One commonly used class of distances consists of the \hl{votewise}
distances  \cite{EFS2015} (Definition~\ref{def:votewise} below). First we require some 
preliminary definitions.

\begin{eg} (commonly used distances on $L(C)$)
\label{eg:dist Sn}
We discuss the following distances on $L(C)$ in this article.
\begin{itemize}
\item The \hl{discrete metric} $d_H$, defined by
$$d_H(\rho, \rho') =  
\begin{cases} 1 \quad \text{if $\rho = \rho'$} \\
0 \quad \text{otherwise}.
\end{cases}
$$
\item The \hl{inversion metric} $d_K$ (also called the swap, bubblesort or Kendall-$\tau$ metric), where $d_K(\rho, \sigma)$ is the 
minimum number of swaps of adjacent elements  needed to change $\rho$ into $\sigma$.
\item \hl{Spearman's footrule} $d_S$, defined by 
 $$d_S(\rho, \rho'):= \sum_{c\in C} |\rk(\rho, c) -\rk(\rho',c)|.$$
 Here $\rk(\rho, c)$ denotes the rank of $c$ in the preference order $\rho$.
 \end{itemize}
\end{eg} 

\begin{defn}
\label{def:norm}
A \hl{seminorm} on a real vector space $X$ is a real-valued function $N$ satisfying the identities
\begin{itemize}
\item  $N(x+y) \leq N(x)+N(y)$
\item  $N(\lambda x) = |\lambda| N(x)$
\end{itemize}
for all $x,y \in X$ and all $\lambda\in \reals$. Note that this implies that $N(0) = 0$ and 
$N(x) \geq 0$ for all $x\in X$.

A \hl{norm} is  a seminorm that also satisfies 
\begin{itemize}
\item $N(x) = 0 \Rightarrow x=0$.
\end{itemize}
\end{defn}

\begin{remark}
Every seminorm induces  a pseudometric via $d(x,y) = ||x-y||$. This is a metric if and only if the 
seminorm is a norm.
\end{remark}

\begin{eg}
\label{eg:norm}
Consider an $n$-dimensional space $X$ with fixed basis $e_1, \dots, e_n$ and corresponding 
coefficients $x_i$ for each element $x\in X$. Fix $p$ with $1\leq p < \infty$ and define the 
$\ell^p$-norm on $X$ by 
$$
||x||_p = \left(\sum_{i=1}^n |x_i|^p\right)^{1/p}.
$$
When $p = \infty$ we define the $\ell^\infty$ norm by 
$$
||x||_\infty =  \max_{1\leq i \leq n} |x_i|.
$$
\end{eg}

\begin{defn} (votewise distances)
\label{def:votewise}

Fix a candidate set $C$ and voter set $V$, and a distance $d$ on $L(C)$. Choose a family $\{N_n\}_{n\geq 1}$ of
seminorms, where $N_n$ is defined on $\reals^n$. Extend $d$ to a
function on $\profs(C,V)$ by taking $n = |V|$ and defining for $\sigma, \pi
\in \profs(C,V)$ 
$$d^{N_n}(\pi,\sigma):= N_n(d(\pi_1, \sigma_1), \dots, d(\pi_n, \sigma_n)). $$ 
This yields a distance on elections having the
same set of voters and candidates. We complete the definition of the extended distance
(which we denote by $d^N$) on $\elecs$ by declaring it to be standard.

We use the abbreviation $d^p$ for $d^{\ell^p}$, and sometimes we even
use just $d$ for $d^N$ if the meaning is clear.
\end{defn}

\begin{remark}
Note that if $d$ is a metric and $N$ is a norm, then $d^N$ is a metric. 
\end{remark}

\begin{eg} (famous votewise distances)
\label{example:dist}
The  distances $d^1_H$ and $d^1_K$ are 
called respectively the \hl{Hamming metric} and \hl{Kemeny metric}.
The Hamming metric measures the minimum number of voters whose preferences 
must be changed in order to convert one profile to another, and as such has an interpretation in 
terms of \emph{unit cost bribery} \cite{FaHe2009}. The Kemeny metric measures the minimum number of 
swaps of adjacent candidates  
required to convert one profile to another, and is related to models of voter error \cite{Youn1995}.
\end{eg}

\begin{eg} (tournament distances)
\label{example:dist 2}
Given an election $E= (C, V, \pi)$, we form the \hl{pairwise tournament digraph} $\Gamma(E)$ with
nodes indexed by the candidates, where the arc from $a$ to $b$ has
weight equal to the \emph{net support} for $a$ over $b$ in a pairwise contest.
Formally, there is an arc from $a$ to $b$ whose weight equals $n_{ab} - n_{ba}$, where 
$n_{ab}$ denotes the number of rankings in $\pi$ in which $a$ is above $b$.

Let $M(E)$ be the weighted adjacency matrix of $\Gamma(E)$ (with respect to an arbitrarily chosen
fixed ordering of $C$). Given a seminorm $N$ on the space of all $|C| \times |C|$ real matrices, 
we define the $N$-\hl{tournament distance} by 
$$
d^N(E, E') = N(M(E) - M(E')).
$$  

A closely related distance is defined in a
similar way, but where each element of the adjacency matrix is
replaced by its sign ($1$, $0$, or $-1$).  We call this the
$N$-\hl{reduced tournament distance}. We denote the special cases
where $N$ is the $\ell^1$ norm on matrices by $d^T$ and $d^{RT}$ respectively. Every (reduced)
tournament distance is a pseudometric.
\end{eg}

\if01

\begin{eg} (a strange pseudometric)
\label{eg:weird dist}
Let $R$ be a social rule. Define $d$ as follows. First, $d(E, E') = 0$ if and only if $R(E)=R(E')$ and 
$|R(E)|=1$; that is, both elections have the same unique winner under $R$. Second, 
$d(E, E') = 1$ if and only if exactly one of $R(E)$ and $R(E')$
 (without loss of generality, $E$) is a singleton, and $R(E) \subset R(E')$. In other words, the 
winner at $E$ is contained in the set of winners at $E'$. Finally, define $d(E,E') = 2$ otherwise.
 
We claim that $d$ is a pseudometric. It is symmetric by definition, and has the correct domain 
and codomain. Furthermore $d(E,E) = 0$ no matter what the value of $|R(E)|$. It remains to check the triangle inequality. Suppose that $d(E, E'') > d(E, E') + d(E',E'')$. If any two of $E, E', E''$ have 
the same winner set under $R$, then all have the same winner set, in which case all three distances have value 0 or all three have value 2, yielding a contradiction. If all winner sets are different, then $d(E, E')$ and $d(E', E'')$ both have value at least 1, again yielding a contradiction.
\end{eg}

\begin{eg} (quasimetrics)
Quasimetrics occur in situations when there is asymmetry in the cost of changing a vote. 
For example, it may be much more costly (for social reasons) to change a profile away from
unanimity than towards it. Some rules, for example Young's rule, are defined in terms of deletion
of voters and for this nonstandard distances are needed. For example,
let $d_{del}(E,E')$ (respectively $d_{ins}(E,E')$) be defined as the
minimum number of voters we must delete from (insert into)  election $E$
in order to reach election $E'$ (or $+\infty$ if $E'$ can never be
reached). Each is nonstandard and a quasimetric. Their symmetrized versions, which are metrics \cite{EFS2012}, are still nonstandard.
\end{eg}

\begin{eg} (shortest path distances)
\label{eg:geodesic}
We can use $\elecs$ as the underlying set of a digraph, by 
defining an arc between $E$ and $E'$  if and only if $d(E, E') = 1$. 
Then the length of a shortest path on such a digraph gives a quasimetric $d'$, which is a metric if 
the underlying digraph is a graph. Sometimes, $d' = d$. For example, $d_H, d_K, d_{ins}, d_{del}$ 
are essentially defined via this construction. 
\end{eg}

\fi

\if01

While shortest path distances occur often in the literature, they are rather special.
\begin{prop}
\label{prop:short path}
A quasimetric on $\elecs$ is a shortest path distance for some nonempty
edge relation on $\elecs$ if and only if it takes integer values, and
for each $y, x\in \elecs$ such that $2\leq d(y,x) < \infty$, there is
$z\in \elecs, z\neq x, z\neq y$ such that $d(y,z) + d(z,x) = d(y,x)$.

\end{prop}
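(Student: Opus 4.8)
The plan is to use throughout the basic fact that a shortest path distance for an edge relation $\to$ (with every arc of length one) is just the least number of arcs on a directed $y$-to-$x$ path, so that it is automatically integer-valued (or $\infty$) and, crucially, admits ``intermediate'' vertices that split it additively. The two stated conditions are exactly the abstraction of these two features, so I would prove the forward direction by reading them off a shortest path, and the converse by reconstructing a graph and checking it recovers $d$ via an induction on distance.

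For the forward direction, suppose $d$ is the shortest path distance of a nonempty edge relation $\to$, so $d(y,x)\in\nats\cup\{\infty\}$ and the first condition holds. For the second, fix $y,x$ with $2\le d(y,x)=k<\infty$ and choose a shortest directed path $y=w_0\to w_1\to\cdots\to w_k=x$, whose vertices are necessarily distinct. Set $z=w_1$. Since $y\neq z$ but there is a single arc $y\to z$, we get $d(y,z)=1$; the subpath $w_1\to\cdots\to w_k$ gives $d(z,x)\le k-1$, while the triangle inequality gives $d(z,x)\ge d(y,x)-d(y,z)=k-1$, so $d(z,x)=k-1$ and in particular $z\neq x$. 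Hence $d(y,z)+d(z,x)=k=d(y,x)$, as required.

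For the converse, assume $d$ is an integer-valued quasimetric satisfying the splitting condition, take the canonical relation $E\to E'$ iff $d(E,E')=1$, and let $d'$ be its shortest path distance; the goal is $d'=d$. One inequality is immediate: along any shortest directed path realizing $d'(x,y)$ each arc contributes distance exactly one to $d$, so the triangle inequality gives $d(x,y)\le d'(x,y)$. For the reverse inequality I would induct on $k=d(x,y)\in\nats\cup\{\infty\}$. The cases $k=0$ (forcing $x=y$) and $k=1$ (an arc, so $d'(x,y)\le 1$) are immediate and $k=\infty$ is vacuous; for $2\le k<\infty$ the splitting condition supplies $z\neq x,y$ with $d(x,z)+d(z,y)=k$, where both summands are positive (distinctness plus the quasimetric property) and hence strictly below $k$, so the inductive hypothesis gives $d'(x,z)\le d(x,z)$ and $d'(z,y)\le d(z,y)$, and the triangle inequality for $d'$ yields $d'(x,y)\le k$. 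Thus $d'=d$.

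It remains to verify that the constructed relation is nonempty, and this is the one genuinely delicate point. If $d$ takes any finite positive value, let $k$ be its least such value; were $k\ge 2$, the splitting condition applied to a pair at distance $k$ would produce a strictly smaller positive value, a contradiction, so $k=1$ and an arc exists. The sole degenerate case, in which $d$ takes only the values $0$ and $\infty$, must be set aside (or excluded by hypothesis), since then no nonempty relation can realize $d$ even though the two conditions hold vacuously. I expect the reverse induction, together with this least-value argument for nonemptiness, to carry the real content, the forward direction being essentially the observation that shortest paths pass through additively splitting intermediate vertices.
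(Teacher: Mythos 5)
Your proof is correct and follows essentially the same route as the paper's own argument: the forward direction reads both conditions off a shortest path (taking $z$ to be the first vertex after $y$), and the converse defines an arc exactly when $d(E,E')=1$ and proves $d'=d$ by induction on the distance, using the quasimetric property to make both summands in the splitting strictly smaller. The only place you go beyond the paper is your treatment of nonemptiness and of the degenerate quasimetric taking only the values $0$ and $\infty$, a corner case the paper's proof silently ignores; your observation that it must be excluded (or dismissed by convention) is a minor but genuine refinement of the proposition as stated.
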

\begin{proof}
Each shortest path quasimetric satisfies the given conditions. For the
converse, suppose that $d$ is a quasimetric on $\elecs$ satisfying the
given conditions. It follows that for every $y,x$, there are $x_0=x,
\dots, x_k=y$ such that $d(y,x) = k$ and each $d(x_i, x_{i+1}) = 1$.
Define an arc between $E$ and $E'$ if and only if $d(E,E') = 1$. Let
$d'$ be the shortest path distance on the associated digraph. It follows
by induction on the minimum value of $k$ that $d = d'$.
\end{proof}
Note that if we allowed arbitrary nonnegative weights on the edges of the digraph, every distance 
could be represented as a shortest path distance. 

\fi

\subsection{Combining consensus and distance}
\label{ss:combine}

In order for a rule to be definable via the DR construction, it is
necessary that the following property holds, and we shall assume this
from now on.

\begin{defn} \label{def:distinguish}
Let $d$ be a distance and $\cons$ a consensus. We say that $(\cons, d)$
\hl{distinguishes consensus choices} if whenever $x\in \cons_r, y \in \cons_{r'}$
and $r\neq r'$, then $d(x,y) > 0$.
\end{defn}

We use a distance to extend a consensus to a social rule in the natural
way. The choice at a given election $E$ consists of all $s$-rankings $r$
whose consensus set $\cons_r$ minimizes the distance to $E$. We
introduce the idea of a score in order to use our intuition about
positional scoring rules. 

\begin{defn} (DR scores and rules)
\label{definition:rules}

Suppose that $\cons$ is an $s$-consensus and
$d$ a distance on $\elecs$. Fix an election $E\in \elecs$. For each $r\in L_s(C)$, the
\hl{$(\cons, d, E)$-score} of $r$ is defined by
$$
|r| : = d(E, \cons_r) := \inf_{E'\in \cons_r} d(E, E').
$$

The rule $R:=\R(\cons, d)$ is defined by
\begin{equation}
\label{eq:argmin}
R(E) = \arg\min_{r\in L_s(C)} |r|.
\end{equation}

We say that $R$ is \hl{distance rationalizable} (DR) with respect to
$(\cons, d)$.
\end{defn}

\if01
\begin{remark}
Note that if $\cons_r$ is empty, then $|r| = \infty$. DR scores are defined so that they are 
nonnegative, and higher score corresponds to larger distance. This is not consistent with
the usual scoring rule interpretation in Example~\ref{eg:scoring}, but the two notions of score 
 are closely related. Our DR scores have the form $M - s$ where $s$ is the score
associated with the scoring rule and $M$ depends on $E$ but not on any $r\in
L_s(C)$.
\end{remark}
\fi

\if01
Table~\ref{t:DR egs} presents a few known rules in this framework. Most of the rules in the 
table are well known. We single out the following less obvious references. The \hl{modal ranking rule} is a very natural rule described in  \cite{CPS2014}. The \hl{voter replacement rule} 
(VRR) was 
defined essentially as a missing entry in such a table \cite{EFS2010}. 
The  entries marked ``trivial" are so labelled because in those cases every election not in 
$\cons$ is at distance $+\infty$ from every $\cons_r$. Missing entries reflect on the authors' 
knowledge, 
and may have established names. Our table overlaps with that in \cite{MeNu2008} --- note that the 
$(\cond, d_H^1)$ entry is incorrect in that reference, as pointed out by Elkind, Faliszewski 
and Slinko \cite{EFS2012}.

\begin{table}
\begin{tabular}{c|cccc}

$\cons / d$ & $\sunam$ & $\wunam$ & $\cond$ & $ \cond^m$  \\
\hline
$d_K^1$ & Kemeny & Borda & Dodgson & \\
$d_H^1$ &modal ranking& plurality & VRR & \\
$d_S^1$ &Litvak &Borda&Dodgson &\\
$d_T$ &Kemeny&Borda&maximin& \\
$d_{RT}$ &Copeland &Copeland & Copeland & Slater\\
$d_{ins}$ &trivial &trivial & maximin  & \\ 
$d_{del}$ &modal ranking & plurality &Young & \\
\end{tabular}
\caption{Some known rules in the DR framework}
\label{t:DR egs}

\end{table}
\fi

\begin{eg} (scoring rules)
\label{eg:scoring}
The \hl{positional scoring rule} defined by a vector $w$ of \hl{weights} with $w_1
\geq \dots \geq w_m$ and $w_1 > w_m$  elects all candidates with maximum score, where
the score of $a$ in the profile $\pi$ is defined as $\sum_{v\in V}
w_{\rk(\pi(v), a)}$. 

The \hl{plurality} ($w = (1,0,0, \dots, 0)$) and \hl{Borda} ($w = (m-1,m-2, \dots, 0)$) 
rules are well-known special cases. 

The positional scoring rule defined by $w$ has the form $\R(\wunam,
d_{S(w)}^1)$ where $d_{S(w)}$ is the generalization of $d_S$ given by $$d_{S(w)}(\rho,
\rho') = \sum_{c\in C} |w_{\rk(\rho,c)} - w_{\rk(\rho',c)}|.$$ \cite[Prop. 8]{EFS2015}. Thus the Borda rule can be expressed as $\R(\wunam,
d_S^1)$.

Also, the positional scoring rule defined by $w$ has the form $\R(\wunam, d_{K(w)}^1)$ where $d_{K(w)}$ is a generalized swap distance on rankings \cite{LeNi1985}. Borda's rule has the form $\R(\wunam, d_K^1)$.

\end{eg}

\begin{remark}
Note that $d_w$ is a metric on $L_s(C)$ if and only if $w_1, \dots, w_s$
are all distinct. 
The score of $r$ under the rule defined by $w$ is the 
difference $nw_1 - |r|$. For example, for Borda with $m$ candidates the
maximum possible score of a candidate $c$ is $(m-1)n$, achieved only for
those elections in $\wunam_c$. Note that as far as the distance to $\wunam$ or $\cond$ is
concerned, $d_S^1$ and $d_K^1$ are proportional, but 
they are not proportional in
general \cite[p. 298--299]{MeNu2008}. The score of $c$ under Borda is exactly
$n(m-1) - K$ where $K$ is the total number of swaps of adjacent
candidates needed to move $c$ to the top of all preference orders in
$\pi(E)$. 
\end{remark}

\if01
\begin{remark}
Consider the set of all elections for which all
positional scoring rules yield the same unique winner. By convexity of
the set of weight vectors, it suffices to check the finite set of rules
defined by weight vectors $(1, 1, \dots, 1, 0, \dots , 0)$ where the
number of $1$'s is fixed and at most $k-1$; in other words the \hl{$k$-approval rules} for $1\leq
k \leq m-1$. This just describes the Lorenz consensus in another way.
\end{remark}
\fi

\begin{eg} (Copeland's rule)
\hl{Copeland's rule} can be represented as $\R(\cond, d_{RT})$. Indeed,
in an election $E$, the Copeland score of a candidate $c$  (the number
of points it scores in pairwise contests with other candidates) equals
$n-1-s$, where $s$ is the minimum number of pairwise results that must be changed 
for $E$ to change to an election that belongs to $\cond_c$.
\end{eg}


\section{Simplex rules}
\label{s:simplex}

Although it is far from the general case, most rules used in practice are in fact anonymous and homogeneous. In this case there is an appealing geometric interpretation. The frequency distribution of votes is sufficient information to determine the output of the rule, and so profile space can be substantially compressed. 

We have previously explored in detail the connection with distance rationalization  \cite{HaWi2016}. We first recall the construction for anonymous rules. Recall that a \hl{multiset} of weight $n$ on an underlying set $S$ of size $M$ is ``a set of $n$ elements of $S$ with repetitions". Formally, there is a function $f:S\to\nats$ where $f(s)$ gives the multiplicity of $s$ in the multiset.

\begin{defn}
Let $E = (C,V, \pi) \in \elecs$.
The \hl{vote number map} $\nummap$ is the map that 
associates $E$ with the multiset $\nummap(E)$ on $L(C)$ of weight $|V|$, in which the multiplicity of $\rho\in L(C)$ is the number of voters in $V$ having that preference order.

A rule $R$ is \hl{anonymous} if 
$R(E) = R(E')$ whenever $\nummap(E) = \nummap(E')$.
We denote the quotient space by $\votesits$ and call it the set of \hl{anonymous profiles}. 
\end{defn}

\begin{remark}
$\nummap(E)$ simply keeps track of the numbers of votes of each type in $\pi$, ignoring the identities of voters. A rule is anonymous if this information is enough to determine the output. The more usual (and equivalent) definition of anonymity is given in Remark~\ref{rem:anon} below. 
\end{remark}

\begin{defn}
\label{def:distmap}

The \hl{vote distribution} associated to $E$ is the  relative frequency distribution on $L(C)$ corresponding to the multiset $\nummap(E)$, which we denote $\distmap(E)$. 
Explicitly, $\distmap$ is a function from $\elecs$ to $[0,1]^{L(C)}$, which gives for each ranking the proportion of voters having this ranking as preference. 
The vote distribution map defines an equivalence relation $\sim$ on $\elecs$ in the usual way: $E\sim E'$ if and only if $\distmap(E) = \distmap(E')$.

An anonymous rule $R$ is \hl{homogeneous} if 
$R(E) = R(E')$ whenever $\distmap(E) = \distmap(E')$.
We denote the 
quotient space by $\votedists$, and call it the set of \hl{anonymous and homogeneous profiles}.
\end{defn}

\begin{remark}
Note that if $E = (C, V, \pi), E' = (C, V', \pi') \in \elecs$ and $\nummap(E) = \nummap(E')$, then $|V| = |V'|$. Thus there is a well-defined map $f:\votesits \to \votedists$ (``divide by the number of voters"), and $\distmap(E)$ simply lists each preference order according to its relative frequency in $\pi$.
\end{remark}

\begin{remark}
\label{rem:anon}
Anonymous and homogeneous rules were already defined in the literature, but in different ways.

Let $g$ be a permutation of $V^*$. For each $E = (C, V, \pi)$, define $g(E) = (C, g(V), g(\pi))$ where $g(\pi)(v) = \pi(g(v))$. A rule $R$ is anonymous if and only if $R(g(E)) = R(E)$ for all $E\in \elecs$ and all $g$.

For $k\geq 1$, define $kE$ to be an election $(C, kV, k\pi)$ where $kV$ consists of $k$ copies of each voter in $V$ and $k\pi$ the corresponding copies of their preference orders (the exact order of the voters is irrelevant since we deal only with anonymous rules). An anonymous rule $R$ is homogeneous if and only if  $R(kE) = R(E)$ for all $E\in \elecs$ and all $k$.

\end{remark}

We call anonymous and homogeneous rules \hl{simplex rules} for short, and now explain why. So far the discussion has been coordinate-free, but it is often useful to introduce coordinates. Given any linear ordering $\rho_1, \rho_2, \dots, \rho_M$ 
on $L(C)$, where  $|C|=m$ and $M=m!$, we can introduce coordinates $x_i$ such that $x_i$ denotes the relative frequency associated to $\rho_i$. The set of frequency distributions 
$\simp^\rats(L(C))$ is then coordinatized by the rational points of the standard simplex.

\begin{defn}
The \hl{standard simplex} in $\reals^M$ is the set
$$
\simp_M:= \left\{x \in \reals^M \mid \sum_i x_i = 1, x_i \geq 0 \text{ for $1\leq i \leq M$} \right\}.
$$
We let $\simp^{\rats}_M:=\rats^M \cap \simp_M$ denote the rational points of $\simp_M$.
\end{defn}

\begin{remark}
For simplicity we sometimes write $x_t$ for the component of $x\in \simp_M$ corresponding to $t\in L(C)$ (instead of $x_i$ where $i$ is the number of $t$ in some linear ordering on $L(C)$).
\end{remark}

\begin{eg}
\label{eg:simp}
An election on candidates $C = \{a,b,c\}$ having $7$ voters of whom $3$ have preference $a\succ b\succ c$, $2$ have preference $b\succ a \succ c$ and $2$ have preference $c\succ b\succ a$ corresponds (under the lexicographic order on $L(C)$) to the anonymous profile $(3,0,2,0,0,2)$ and hence to the point $(3/7,0,2/7,0,0,2/7)\in \simp_6$.
\end{eg}

\begin{remark}
We always consider $\simp_M$ as embedded in $\reals_M$. It is contained in a unique hyperplane $H_M$, which is given by the single linear equation $\sum_{i=1}^M x_i = 1$.
\end{remark}

We can interpret each anonymous and homogeneous rule as being defined on $\simp^{\rats}_M$. If the rule is also \emph{continuous}  then we can in fact define it on $\simp_M$, which allows us to use our usual geometric intuition. For example, the domain of 
$\sunam$ consists of the corners of the simplex, while the domain of $\wunam$ also lies on the boundary of the simplex (for example, 
$\wunam_a$ contains all points of the simplex for whom all coordinates corresponding to rankings with $a$ not at the top are zero). 

The general approach of the last paragraph has been adopted by many previous authors. For example, Saari \cite{Saar1995} simply uses the terminology ``profiles" to refer to vote distributions, and all rules he considers are continuous simplex rules by definition. 

\subsection{Distance rationalization in the simplex}
\label{ss:DR simp}

We shall see in Section~\ref{ss:quot dist} how anonymous and homogeneous distance rationalizable rules defined using profiles can be interpreted using the simplex. 
The converse idea is to define distance rationalizable rules directly on the simplex rather than on profile space. We make the obvious definitions by analogy with those for profiles.

\begin{defn}
\label{def:DR simp}
Given a fixed candidate set $C$ of size $m$ and a distance on $\simp^\rats_M$ where $M=m!$, a \hl{partial social rule} on $\simp^\rats_M$ of size $s$ with domain $D\subseteq  \simp^\rats_M$ is a mapping taking each element of $D$  to a nonempty subset of $L_s(C)$. A \hl{consensus} on $\simp^\rats_M$ is a partial social rule that is single-valued at every point (a partial social function). Given a consensus $K$ and distance $\delta$ on $\simp^\rats_M$, the rule $\R(K, \delta)$ is defined by 
\begin{equation}
\label{eq:DR argmin}
R(x) = \arg\min_{r\in L_s(C)} \delta(x, K_r).
\end{equation}
\end{defn}

The most obvious distances mathematically are surely the $\ell^p$ metrics. 
The interpretation in terms of social choice is less compelling for $p > 1$, since we are measuring the amount of effort needed to 
change one election into another by transferring  vote mass under a nonlinear penalty. The case $p = 1$ is by far the most commonly studied, and also arises directly from votewise distances, unlike the case $p>1$.

To our knowledge, several fairly obvious rules of this type have not yet been studied in detail. Here is an example using the unanimity consensus.

\begin{prop}

\label{prop:$ell^p$}
Fix $p$ with $1\leq p \leq \infty$ and consider the social choice rule $\R(\sunam^s, \ell^p)$ defined on $\simp^\rats_M$. This rule chooses precisely the initial $s$-ranking from each of the most frequent ranking(s) from the input profile.
\end{prop}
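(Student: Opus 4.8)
The plan is to translate the consensus into the simplex and reduce the whole minimization to comparing distances to vertices. First I would record that, under $\distmap$, the consensus set $\sunam^s_r$ is carried to the finite set of vertices $\{e_\rho : \tau_s(\rho)=r\}$ of $\simp_M$, where $e_\rho$ denotes the standard basis vector supported on $\rho\in L(C)$ and $\tau_s(\rho)\in L_s(C)$ is the $s$-ranking formed by the top $s$ entries of $\rho$ (these vertices are exactly the unanimous profiles whose common top-$s$ ranking is $r$). Since the fibres $A_r:=\tau_s^{-1}(r)$ partition $L(C)$, the $\ell^p$-score satisfies $\delta_p(x,\sunam^s_r)=\min_{\rho\in A_r}\|x-e_\rho\|_p$, so that minimizing over $r\in L_s(C)$ is the same as minimizing $\|x-e_\rho\|_p$ over all $\rho\in L(C)$ and then reading off $\tau_s(\rho)$.

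The crux is then a single monotonicity computation, which I would arrange to be uniform in $p$. For $1\le p<\infty$ I would expand
\[
\|x-e_\rho\|_p^p=(1-x_\rho)^p+\sum_{\sigma\neq\rho}x_\sigma^p=\|x\|_p^p+(1-x_\rho)^p-x_\rho^p,
\]
and observe that $\|x\|_p^p$ does not depend on $\rho$, so that $\|x-e_\rho\|_p$ is a strictly decreasing function of $x_\rho$ because $h(t):=(1-t)^p-t^p$ is strictly decreasing on $[0,1]$ (its derivative is $-p(1-t)^{p-1}-pt^{p-1}<0$). For $p=\infty$ I would note that, since $x\in\simp_M$, the off-$\rho$ coordinates sum to $1-x_\rho$ and hence $\max_{\sigma\neq\rho}x_\sigma\le 1-x_\rho$, giving $\|x-e_\rho\|_\infty=1-x_\rho$, again strictly decreasing in $x_\rho$. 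In every case $\|x-e_\rho\|_p$ is smallest exactly at the vertices $e_\rho$ for which $x_\rho$ is maximal.

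Combining these facts, the minimum over $r$ of $\delta_p(x,\sunam^s_r)$ is attained precisely at those $r$ of the form $\tau_s(\rho)$ with $x_\rho=\max_\sigma x_\sigma$; that is, $R(x)=\{\tau_s(\rho):x_\rho\text{ is maximal}\}$, which is exactly the set of initial $s$-rankings of the most frequent ranking(s) in the input profile. The tie analysis needed for the word ``precisely'' falls out of strict monotonicity: two vertices are equidistant from $x$ if and only if the corresponding rankings are equally frequent, so no spurious $s$-ranking can enter the $\arg\min$.

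The part I expect to require the most care is the uniformity across the whole range $1\le p\le\infty$ — in particular the separate (but easy) treatment of the endpoint $p=\infty$, where the $\ell^p^p$ expansion is unavailable — together with the reduction from a distance to a union of vertices down to a distance to the single nearest vertex. Once the monotonicity of $h$ is in hand, the genuinely substantive content of the statement, namely that the winning set is independent of $p$, follows immediately.
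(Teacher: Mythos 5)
Your proof is correct and follows essentially the same route as the paper's: both reduce the score $\delta(x,\sunam^s_r)$ to the minimum of the distances to the vertices $e_\rho$ with top-$s$ ranking $r$, and both rest on the same expansion $\|x-e_\rho\|_p^p=\|x\|_p^p+(1-x_\rho)^p-x_\rho^p$, which the paper phrases as the pairwise comparison $\delta^p(x,e_t)-\delta^p(x,e_{t'})=(1-x_t)^p-(1-x_{t'})^p+x_{t'}^p-x_t^p$. The only differences are refinements of detail on your side: you spell out the $p=\infty$ case explicitly ($\|x-e_\rho\|_\infty=1-x_\rho$ via the simplex constraint), where the paper only remarks that ``the same argument works with a different computation,'' and you make the tie analysis behind the word ``precisely'' explicit through the strict monotonicity of $(1-t)^p-t^p$.
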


\begin{proof}
Let $\delta$ be the $\ell^p$ distance on $\reals^M$ and let $e_t$ denote the basis vector in $\reals^M$ corresponding to $t\in L(C)$, a corner of the simplex.  Then for $x\in \Delta_M$ and $t, t'\in L(C)$,
\begin{align*}
\delta^p(x, e_t) - \delta^p(x, e_{t'}) & =  (1-x_t)^p + \sum_{k\neq t} x_t^p - (1-x_{t'})^p - \sum_{k\neq t'} x_{t'}^p\\
& = (1-x_t)^p - (1-x_{t'})^p + x_{t'}^p - x_t^p.
\end{align*}
Choosing $x_{t^*}$ to be the maximum value among the entries of $x$ and setting $t=t^*$ shows the right side of the above expression to be nonpositive for all $t'\neq t^*$. Hence $t^*$ is a minimizer (the same argument works for $p = \infty$ with a different computation). 

Thus if $\rho^*$ denotes the initial $s$-ranking corresponding to $t^*$ (written $t^*_s = \rho^*$), then for each $s$-ranking $\rho$, we have
$$
\delta(x, \sunam^s_{\rho}) = \min_{\{t: t_s = \rho \}} \delta(x, e_t) \geq \delta(x, e_{t^*}) = \delta(x, \sunam^s_{\rho_*}).
$$
\end{proof}

\begin{remark}
For $s=m$ this rule is 
the \hl{modal ranking rule} \cite{CPS2014} (the term \hl{plurality ranking rule} may seem more logical, but it is important not to be confused with the ranking induced by plurality scores of candidates in the case where these scores all differ). For $s=1$ it is the social choice variant of the modal ranking rule, choosing the highest ranked candidate from each modal ranking. Note that the latter rule differs from plurality rule, which is what we would get if we used the simplex in dimension $m$, as for example done by Saari \cite{Saar1995}.
\end{remark}

In order to find more interesting rules defined naturally on $\simp^\rats_M$, we may need to use less common distances. 
For example, so-called \emph{statistical distances} such as the (asymmetric) \emph{Kullback--Leibler divergence} (or \emph{relative entropy}) and the \emph{Hellinger metric} are heavily used in many application areas (the simpler \emph{total variation distance} arises in Example~\ref{eg:transport} below). We do not present a detailed study here, deferring it to future work. Instead, we now move on to explore distances on $\simp^\rats_M$ induced by distances on $\elecs$.

\subsection{Quotient distances}
\label{ss:quot dist}

We can define a simplex rule by first starting with profiles and passing to the quotient space, provided the
rule in question is anonymous and homogeneous (the case of Dodgson's rule $\R(\cond, d_K^1)$ which is anonymous but not homogeneous shows that care must be taken). 

Since votewise distances are 
very natural and the $\ell^1$ norm is the most obvious 
choice for a votewise distance (because it just adds the distance from each voter), we obtain several rules with an $\ell^1$ flavour in this way. For example, the Hamming metric yields a constant multiple of $\ell^1$ via the Wasserstein construction as described in Example~\ref{eg:transport} below, while the Kemeny metric and $\ell^1$ lead to rules such as Borda and Kemeny's rule. We discuss $\ell^1$-votewise rules in more detail in Section~\ref{s:ell one}.

\begin{defn}
\label{def:anon dist}
A distance is \hl{anonymous} if  $d(g(E), g(E')) = d(E, E')$ for all $E, E'\in \elecs$ and all permutations $g$.
An anonymous distance is \hl{homogeneous} if $d(kE, kE') = d(E,E')$ for all $E, E'\in \elecs$ and all $k\geq 1$.
\end{defn}
Note that votewise distances need to be normalized before becoming homogeneous, as we explain in Remark~\ref{rem:normalize}. 

Every anonymous and homogeneous rule $R$ corresponds to a simplex rule $\overline{R}$. When  $R = \R(\cons, d)$ where both $\cons$ and $d$  are anonymous and homogeneous, we can express $\overline{R}$ as $\R(K, \delta)$ where $K, \delta$ depend nicely on $\cons, d$. The mapping from profiles to the simplex yields the obvious consensus $K = \overline{\cons}$. The distance $\delta$ is a little more involved. The obvious idea is to use a quotient distance \cite{DeDe2009}. This concept is standard but relatively little-known.

\begin{defn}
\label{def:quot dist}
We define $\overline{d}: \simp_M \times \simp_M \to \reals_+$ to be the \hl{quotient distance}
induced by $\sim$. That is,
\begin{equation}
\label{eq:quot}
\overline{d}(x,y)  = \inf \sum_{i=1}^k d(E_i, E'_i)
\end{equation}
where the infimum is taken over all  \emph{admissible paths}, namely
paths such that $E'_i \sim E_{i+1}$ for $1\leq i \leq k-1$, $E=E_1, E' = E'_k$, $E$ projects
to $x$ and $E'$ to $y$. 
\end{defn}

In general, quotient distances are tricky to work with, owing to the complicated definition. In our setup, it turns out that they are reasonably tractable.

\begin{prop}
\label{prop:homog simple}
Let $d$ be an anonymous and homogeneous standard distance 
on $\elecs$. Then $\overline{d}$ is given by 
$$
\overline{d}(x, y) = \inf_{E'} d(E, E')
$$
where $E, E'$ range over all elections having an equal number of voters, such that $\distmap(E) = x, \distmap(E') =y$.
\end{prop}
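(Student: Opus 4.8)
The plan is to prove two inequalities. Write $\delta(x,y) := \inf_{E'} d(E,E')$ for the claimed right-hand side, the infimum over elections $E,E'$ with equal numbers of voters and $\distmap(E)=x$, $\distmap(E')=y$. The easy inclusion is $\overline{d}\le\delta$: a single pair $(E,E')$ is exactly an admissible path with $k=1$ in Definition~\ref{def:quot dist}, since the chain condition ``$E'_i\sim E_{i+1}$ for $1\le i\le k-1$'' is then vacuous. Hence $\overline{d}(x,y)\le d(E,E')$ for every such pair, and taking the infimum gives the bound. The restriction to equal numbers of voters costs nothing here: as $d$ is standard (Definition~\ref{def:dist}), any pair with distinct voter sets contributes $+\infty$, while by anonymity (Definition~\ref{def:anon dist}) relabelling a pair with the same number of voters onto one common voter set leaves $d$ unchanged, so $\delta$ is equally the infimum over pairs sharing a single voter set.

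The substance is the reverse inequality $\delta\le\overline{d}$, i.e.\ that no admissible chain beats a single step. Fix an admissible path $E_1,E'_1,\dots,E_k,E'_k$ of finite total cost; standardness forces each segment $(E_i,E'_i)$ to live on a common voter set of some size $n_i$. First I would \emph{homogenize}: let $N$ be a common multiple of the $n_i$ and, using homogeneity, replace each segment by $\hat E_i:=(N/n_i)E_i$ and $\hat E'_i:=(N/n_i)E'_i$. These have $N$ voters, satisfy $d(\hat E_i,\hat E'_i)=d(E_i,E'_i)$, and preserve the distributions, so the admissibility relation $E'_i\sim E_{i+1}$ upgrades to an equality of multisets $\nummap(\hat E'_i)=\nummap(\hat E_{i+1})$.

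Next I would \emph{glue} the rescaled segments onto one fixed voter set $W$ of size $N$, producing a genuine path $G_1,\dots,G_{k+1}$ in $\elecs(C,W)$ with $d(G_i,G_{i+1})=d(E_i,E'_i)$, $\distmap(G_1)=x$ and $\distmap(G_{k+1})=y$. Starting from $G_1:=g_1(\hat E_1)$ for any permutation carrying $\hat E_1$ onto $W$, I proceed inductively: given $G_i$ on $W$ with $\nummap(G_i)=\nummap(\hat E_i)$, choose a permutation $g_i$ of $V^*$ with $g_i(\hat E_i)=G_i$ (possible because the two have equal multisets and $|W|=N$), and set $G_{i+1}:=g_i(\hat E'_i)$. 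By the joint invariance of anonymity, $d(G_i,G_{i+1})=d(g_i\hat E_i,g_i\hat E'_i)=d(\hat E_i,\hat E'_i)=d(E_i,E'_i)$, while $\nummap(G_{i+1})=\nummap(\hat E'_i)=\nummap(\hat E_{i+1})$ sustains the induction. The triangle inequality then telescopes to $d(G_1,G_{k+1})\le\sum_i d(G_i,G_{i+1})=\sum_i d(E_i,E'_i)$, and since $G_1,G_{k+1}$ share $W$ and project to $x,y$ they are admissible for $\delta$; taking the infimum over chains yields $\delta\le\overline{d}$.

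The step I expect to be delicate is the gluing. Anonymity as defined is a \emph{joint} symmetry — the same permutation is applied to both arguments of $d$ — so one cannot simply swap $\hat E_{i+1}$ for an arbitrary representative of its distribution, even though it has the right multiset; doing so would silently invoke the false claim that $d$ depends only on the two marginal multisets (a two-voter example where the same pair of marginals gives different distances shows it can fail). The remedy is to carry a single alignment permutation $g_i$ per segment and apply it to both endpoints at once, which is precisely what anonymity licenses, with homogeneity having first brought every multiset to the common level $N$ so that such permutations exist.
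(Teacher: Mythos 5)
Your proof is correct and follows essentially the same route as the paper's: homogeneity to equalize voter-set sizes, anonymity applied \emph{jointly} to both endpoints of a segment so that it can be aligned with its neighbour, and the triangle inequality to collapse the chain to a single step (the paper merges the last two segments and inducts on $k$, whereas you rescale all segments to a common size $N$ and telescope once --- a cosmetic difference). The only item in the paper's proof absent from yours is the closing observation that, by anonymity, the representative $E$ of $x$ may be held fixed so that the infimum is taken over $E'$ alone; under your reading of the statement as an infimum over pairs this is not needed, but it is a one-line addition worth including.
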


\begin{proof}
Let $x,y\in \votedists$ and consider an admissible path and its corresponding sum
$$
\sum_{i=1}^k d(E_i, E'_i)
$$
where $k>1$. We show that we can reduce the value of $k$.
By homogeneity of $d$, we can change the size of the voter sets so that $E_k$ and $E'_{k-1}$ have the same sized voter set. Then we 
can choose a permutation $g$ of $V^*$ taking $E_k$ to $E'_{k-1}$. Since $d$ is anonymous, using the triangle inequality we obtain
\begin{align*}
d(E_{k-1}, E'_{k-1}) + d(E_k, E'_{k}) &= d(E_{k-1}, E'_{k-1}) + d(g(E_k), g(E'_{k})) \\ & =  d(E_{k-1}, E'_{k-1}) + d(E'_{k-1}, g(E'_k)) \\
&\geq d(E_{k-1}, g(E'_k)).
\end{align*}
This gives an admissible path with $k$ replaced by $k-1$. 

Thus without loss of generality, in computing $\overline{d}$ we need only deal with the case $k=1$. Finally, if the minimum is reached for some $E=E_*$, then for each $E$ we can write $g(E_*) = E$ for some $g$. Thus by anonymity $d(E, E') = d(E_*, g^{-1}(E'))$. Thus the minimum can be taken only over $E'$, since $g^{-1}(E')$ ranges over all elections as $g$ ranges over all permutations. 
\end{proof}

\begin{remark}
Proposition~\ref{prop:homog simple} shows that we can replace $\inf$ by $\min$, since by anonymity, once we fix the number of voters there are only a finite number of possible distances $d(E, E')$ to consider.
\end{remark}

\begin{prop}
Let $d$ be an anonymous and homogeneous standard distance 
on $\elecs$, and $K$ be an anonymous and homogeneous consensus.
If $R=\R(K, \delta)$ is anonymous and homogeneous, then  $\overline{R}=\R(\overline{K},\overline{d})$.
\end{prop}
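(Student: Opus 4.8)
The plan is to prove that the two rules agree pointwise on $\simp^\rats_M$, by showing that the simplex score $\overline{d}(x,\overline{K}_r)$ encodes exactly the information about the index $r$ that the profile score $d(E,K_r)$ does for any election $E$ projecting to $x$. First I would set up the two auxiliary objects. Since $R$ is anonymous and homogeneous, $\overline{R}(x):=R(E)$ is well defined for every $E$ with $\distmap(E)=x$, because such an $R$ is constant on $\distmap$-fibres. Applying the same reasoning to the partial function $K$ shows that $\distmap(E')\in\overline{K}_r$ if and only if $E'\in K_r$: given $E'$ with $\distmap(E')=\distmap(E'')$ and $E''\in K_r$, one scales both to a common number of voters (homogeneity) and matches their multisets by a permutation (anonymity), forcing $K(E')=K(E'')=r$. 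Hence $\overline{K}_r=\distmap(K_r)$ is a genuine consensus on the simplex, so $\R(\overline{K},\overline{d})$ is defined.

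The technical core is to rewrite the simplex score as an infimum of profile scores. Combining $\overline{d}(x,\overline{K}_r)=\inf_{y\in\overline{K}_r}\overline{d}(x,y)$ with the explicit formula from Proposition~\ref{prop:homog simple}, and using the identification $\distmap(E')\in\overline{K}_r\iff E'\in K_r$, I would collapse the two nested infima into a single one over pairs $E,E'$ of equal voter count with $\distmap(E)=x$ and $E'\in K_r$. Anonymity of $d$ lets me replace $E$ by any fixed realization $E_N$ of $x$ on $N$ voters without changing the value, and standardness of $d$ confines the inner infimum to $E'\in K_r$ on the same $N$ voters. This yields the identity $\overline{d}(x,\overline{K}_r)=\inf_N d(E_N,K_r)$, where $N$ ranges over the voter counts realizing $x$.

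With this identity the forward inclusion $\overline{R}(x)\subseteq\R(\overline{K},\overline{d})(x)$ is immediate. If $r^*\in\overline{R}(x)=R(E_N)$, then $d(E_N,K_{r^*})\le d(E_N,K_r)$ for every $r$ and, since $R$ is homogeneous, for every admissible $N$; taking the infimum over $N$ on both sides preserves the inequality, giving $\overline{d}(x,\overline{K}_{r^*})\le\overline{d}(x,\overline{K}_r)$ for all $r$, so $r^*\in\R(\overline{K},\overline{d})(x)$.

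The main obstacle is the reverse inclusion, namely ruling out a \emph{spurious} simplex minimizer: an index $r^*$ that is never a profile minimizer at any finite voter count but whose quotient score attains the overall minimum $\mu^*:=\min_r\overline{d}(x,\overline{K}_r)$ only in the limit $N\to\infty$. I would resolve this using homogeneity of $R$ together with attainment of the infimum. Homogeneity makes the profile arg-min set $S=R(E_N)$ independent of $N$, so $r^*\notin S$ forces $d(E_N,K_{r^*})>\min_r d(E_N,K_r)\ge\mu^*$ for every $N$; if the infimum defining $\overline{d}(x,\overline{K}_{r^*})$ is attained at some finite $N_1$, then $\overline{d}(x,\overline{K}_{r^*})=d(E_{N_1},K_{r^*})>\mu^*$, contradicting $r^*\in\R(\overline{K},\overline{d})(x)$. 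This is precisely where the homogeneity hypothesis on $R$ is essential, excluding anomalies such as Dodgson's rule $\R(\cond,d_K^1)$, which is anonymous but not homogeneous; the delicate case to watch is an infinite consensus set $\overline{K}_r$ (a face, or the Condorcet region) for which the closest consensual distribution may only be approached through increasing voter counts, so that the attainment of the infimum must be argued rather than assumed.
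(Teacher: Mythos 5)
Your reduction of the quotient score to the identity $\overline{d}(x,\overline{K}_r)=\inf_N d(E_N,K_r)$ (an infimum over admissible voter counts $N$) is correct, and your forward inclusion $\overline{R}(x)\subseteq\R(\overline{K},\overline{d})(x)$ is sound. In fact, on this point you are more careful than the paper itself: the paper's proof asserts outright that $\overline{d}(\distmap(E),\overline{K}_r)=d(E,K_r)$ for a \emph{fixed} representative $E$, i.e.\ that the quantity $d(E_N,K_r)$ does not depend on $N$, and this does not follow from Proposition~\ref{prop:homog simple}, because as $N$ grows the set of consensus elections realizable with $N$ voters grows, so $d(E_N,K_r)$ is only non-increasing along divisibility. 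However, your reverse inclusion has a genuine gap, and it is not one that can be "argued rather than assumed": if the infimum $\inf_N d(E_N,K_{r^*})$ is not attained, your strict inequalities $d(E_N,K_{r^*})>\min_r d(E_N,K_r)\geq\mu^*$ do not survive passage to the infimum, and nothing in the stated hypotheses forces attainment.

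Indeed the spurious-minimizer scenario you describe can actually occur, so neither your argument nor the paper's can be completed as stated. Take $m=2$ with rankings $ab$ and $ba$, let $d$ be the normalized Hamming distance (the distance $d_*^1$ of Remark~\ref{rem:normalize} with underlying metric $d_H$), and define the consensus by: $K_b$ consists of all elections with vote distribution $(1,0)$ (everyone reports $ab$), while $K_a$ consists of all elections with vote distribution $(1-1/k,1/k)$ for some integer $k\geq 2$, or $(0,1)$. Then $d$ is anonymous, homogeneous and standard, $K$ is an anonymous and homogeneous consensus, and $(K,d)$ distinguishes consensus choices. A direct check shows $\R(K,d)(E)=\{b\}$ when every voter reports $ab$ (there the score of $a$ is $1/N>0$ for $N\geq 2$ voters, and $1$ for $N=1$, while the score of $b$ is $0$) and $\R(K,d)(E)=\{a\}$ at every other election; so $R$ is anonymous and homogeneous. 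Yet at $x=(1,0)$ the quotient scores are $\overline{d}(x,\overline{K}_b)=0$ and $\overline{d}(x,\overline{K}_a)=\inf_{k\geq 2}1/k=0$, whence $\R(\overline{K},\overline{d})(x)=\{a,b\}\neq\{b\}=\overline{R}(x)$. So the proposition fails without a further hypothesis, and the failure occurs exactly at the step you flagged: here $d(E_N,K_a)=1/N$ for every $N$ while $\overline{d}(x,\overline{K}_a)=0$. To repair the statement one must assume something that forces attainment of $\inf_N d(E_N,K_r)$ for every $x$ and $r$ --- for instance that each $\overline{K}_r$ is a closed subset of $\simp_M$, or a separation condition on consensus sets of the kind advocated in Section~\ref{s:future}; under such a hypothesis your argument goes through verbatim, and it is the correct way to finish the proof.
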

\begin{proof}
By Definition \ref{definition:rules}, $R(E) = \arg\min_{r\in L_s(C)} d(E, K_r)$. By Proposition~ \ref{prop:homog simple}, $\overline{d}(x, y) = \inf_{E'} d(E, E')$, such that $\distmap(E) = x, \distmap(E') =y$. In particular, $\overline{d}(\distmap(E),\overline{K}_r)=\inf_{E'} d(E, E')$ with $E'\in K_r$, so $\overline{d}(\distmap(E),\overline{K}_r)=d(E,K_r)$. Thus, $\overline{R}(\distmap(E))=R(E)=\arg\min_{r\in L_s(C)} \overline{d}(\distmap(E),\overline{K}_r)$ and $\overline{R}=\R(\overline{K},\overline{d})$.

\end{proof}


\subsection{The $\ell^p$-votewise case --- Wasserstein distance}
\label{ss:wasserstein}

In the special case of $\ell^p$-votewise distances, we can describe $\overline{d}$ in more detail
using a well-known construction from probability theory, the Wasserstein distance, which we now recall.

Let $S$ be a finite set of size $M$, and let $\simp(S)$ denote the set of probability distributions on $S$.
For a distance $d$ defined on $S$, the function $d^p_W:\simp(S)\times\simp(S)
\to \reals$ is defined by \[d^p_W(x,y)^p=\inf_A\sum_{r,r'\in S}A_{r,r'}d(r,r')^p,\] 
where the infimum is taken over all \emph{couplings} of
$x$ and $y$, defined as nonnegative square matrices of size $M$ whose
marginals are $x$ and $y$ respectively (i.e. $\forall r,
\sum_{r'}A_{rr'}=x_r$ and $\forall r',\sum_rA_{rr'}=y_{r'}$). Basically,
it represents the minimum cost to move from one configuration to
another, where the underlying distance $d$ defines the cost of each
movement. Indeed, this construction leads to a new distance.

\begin{prop}
If $d$ is a distance on $S$, then $d_W^p$ is a distance on $\simp(S)$. If $d$ is a metric, then so is $d_W^p$.
\end{prop}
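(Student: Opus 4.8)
The plan is to verify directly, from the variational definition, the two axioms of a distance (reflexivity and the triangle inequality) and, in the metric case, the two additional axioms (symmetry and $d_W^p(x,y)=0\Rightarrow x=y$), always by manipulating the coupling matrix $A$. Reflexivity is immediate: the diagonal coupling with $A_{rr}=x_r$ and $A_{r,r'}=0$ otherwise has the correct marginals and cost $\sum_r x_r\,d(r,r)^p=0$, and since the objective is a sum of nonnegative terms the infimum is nonnegative, so $d_W^p(x,x)=0$. I would record here two facts used later: the feasible set of couplings is nonempty (the product coupling $A_{r,r'}=x_ry_{r'}$ is admissible) and, since $S$ is finite, it is compact, so each infimum is in fact attained.

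The substance of the proof is the triangle inequality $d_W^p(x,z)\le d_W^p(x,y)+d_W^p(y,z)$, and this is the only step I expect to require care. The device is the \emph{gluing lemma}. Given arbitrary couplings $A$ of $(x,y)$ and $B$ of $(y,z)$, I would compose them through $y$ by setting $C_{r,r',r''}=A_{r,r'}B_{r',r''}/y_{r'}$, with the convention that this is $0$ when $y_{r'}=0$ (which forces both numerator factors to vanish), and then $D_{r,r''}=\sum_{r'}C_{r,r',r''}$. A one-line marginal computation shows $D$ is a coupling of $(x,z)$, that marginalizing $C$ over $r''$ returns $A$, and that marginalizing over $r$ returns $B$. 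Feeding $D$ into the definition and using $d(r,r'')\le d(r,r')+d(r',r'')$ together with Minkowski's inequality for the $\ell^p$ norm weighted by the nonnegative masses $C_{r,r',r''}$ gives
\begin{align*}
d_W^p(x,z)
&\le\Big(\sum_{r,r''}D_{r,r''}d(r,r'')^p\Big)^{1/p}
=\Big(\sum_{r,r',r''}C_{r,r',r''}d(r,r'')^p\Big)^{1/p}\\
&\le\Big(\sum_{r,r',r''}C_{r,r',r''}d(r,r')^p\Big)^{1/p}
+\Big(\sum_{r,r',r''}C_{r,r',r''}d(r',r'')^p\Big)^{1/p}.
\end{align*}
By the marginal identities the two right-hand sums equal $\sum_{r,r'}A_{r,r'}d(r,r')^p$ and $\sum_{r',r''}B_{r',r''}d(r',r'')^p$; since $A$ and $B$ were arbitrary, taking the infimum over each independently produces exactly $d_W^p(x,y)+d_W^p(y,z)$. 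For $p=1$ the Minkowski step is unnecessary and the bound follows from plain linearity.

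For the metric claim I would add two short arguments. Symmetry follows because $d(r,r')=d(r',r)$ means the transpose of any coupling of $(x,y)$ is a coupling of $(y,x)$ with the same cost, so the two infima coincide. For definiteness, suppose $d_W^p(x,y)=0$; by the compactness remark the infimum is attained at some $A^*$ with $\sum_{r,r'}A^*_{r,r'}d(r,r')^p=0$, so every summand vanishes, and $A^*_{r,r'}>0$ forces $d(r,r')=0$, hence $r=r'$ as $d$ is a metric. Thus $A^*$ is supported on the diagonal, and reading off its marginals gives $x_r=A^*_{r,r}=y_r$ for all $r$, i.e. $x=y$. The only genuinely delicate point in the whole argument is the gluing step of the triangle inequality, in particular the $y_{r'}=0$ bookkeeping and the correct application of Minkowski's inequality with the weights $C_{r,r',r''}$; everything else is routine.
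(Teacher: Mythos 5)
Your proof is correct. Note, however, that the paper does not actually prove this proposition at all: its entire ``proof'' is a citation to Villani's book (Chapter~6), where the statement is established for metrics on Polish spaces via the gluing lemma. What you have written is, in essence, a self-contained finite-dimensional reconstruction of that cited argument: the diagonal coupling for reflexivity, the explicit gluing $C_{r,r',r''}=A_{r,r'}B_{r',r''}/y_{r'}$ (with the $y_{r'}=0$ convention) plus weighted Minkowski for the triangle inequality, transposition for symmetry, and attainment-plus-diagonal-support for definiteness. All the steps check out: the marginal computations for $C$ and $D$ are right, the passage from $\inf$ of the $p$-th powers to the $p$-th root commutes because $t\mapsto t^{1/p}$ is increasing, and compactness of the coupling polytope justifies attainment in the definiteness argument. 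Your version actually buys something the bare citation does not: the paper's notion of ``distance'' is a hemimetric (no symmetry, no separation axiom), a generality not literally covered by Villani's metric-space treatment, and your argument correctly uses only $d(r,r)=0$ and the triangle inequality in the distance half of the claim, invoking symmetry and definiteness of $d$ only where the metric conclusion requires them. In the finite setting the gluing needs no disintegration or measurability care, so your explicit matrix construction is the right level of machinery.
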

\begin{proof}
See \cite[Ch. 6]{Vill2008}.
\end{proof}

\begin{remark}
The function $d_W^p$ goes by several names, some common ones being the \hl{$l^p$-transportation} distance, the \hl{Kantorovich $p$-distance}, the
\hl{$p$-Wasserstein distance}. When $p=1$, it is also called the \hl{Earth Mover's}
distance or \hl{first Mallows metric}, and is 
used heavily in several areas of computer science, particularly
image retrieval and pattern recognition.
\end{remark}

Now we are able to make the link with the votewise metrics, by 
applying the construction of $d^p_W$ in the case $S = L(C)$. Scaling a votewise distance based on the $\ell^p$ norm gives a homogeneous distance with a special formula, and this turns out to be exactly the $p$-Wasserstein distance.

\begin{remark}
\label{rem:normalize}
A votewise distance based on $\ell^p$ is not homogeneous, as its value depends on the number of voters. However, we may make an equivalent homogeneous version by scaling. For each real number $p$ with $1\leq p \leq \infty$, and each positive integer $n$, we define a new norm on $\mathbb{R}^n$ by defining for each $x\in \mathbb{R}^n$ 
$$||x||^*_p = \frac{1}{n}||x||_p.$$
Denote the votewise distance corresponding to this norm by $d_*^p$; then  $\R(\cons, d^p) =  \R(\cons, d_*^p)$.
\end{remark}

\begin{prop}
Let $d$ be a finite distance on $L(C)$. Then $\overline{d_*^p} = d^p_W$.
\end{prop}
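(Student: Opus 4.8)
The plan is to unwind $\overline{d_*^p}$ using Proposition~\ref{prop:homog simple} and then to recognise the resulting minimisation as a transportation (coupling) problem, so that it coincides with the Wasserstein distance $d_W^p$. First I would verify that $d_*^p$ satisfies the hypotheses of Proposition~\ref{prop:homog simple}: it is standard by Definition~\ref{def:votewise}; it is anonymous because the $\ell^p$ norm is invariant under a simultaneous permutation of the per-voter distances $d(\pi_v,\sigma_v)$; and it is homogeneous because the relevant normalisation is the averaged quantity $d_*^p(\pi,\sigma)=\bigl(\frac1n\sum_{v} d(\pi_v,\sigma_v)^p\bigr)^{1/p}$, which is exactly what cancels the effect of duplicating every voter (this averaging is essential, as it is what absorbs the $n$ produced by the $\ell^p$ sum). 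Proposition~\ref{prop:homog simple} and the Remark following it then give, for rational $x,y\in\simp_M$,
\[
\overline{d_*^p}(x,y)=\min_{\pi,\sigma}\Bigl(\frac1n\sum_{v\in V} d(\pi_v,\sigma_v)^p\Bigr)^{1/p},
\]
the minimum being over all pairs of profiles $\pi,\sigma\colon V\to L(C)$ on a common voter set $V$ of some size $n$ whose vote distributions are $x$ and $y$ respectively.

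The core of the argument is a dictionary between such profile pairs and couplings. Given $\pi,\sigma$ on $V$ with $|V|=n$, the map $v\mapsto(\pi(v),\sigma(v))$ produces the integer matrix $B_{r,r'}=|\{v\in V: \pi(v)=r,\ \sigma(v)=r'\}|$, and I would set $A:=B/n$. Its row sums are the frequencies $x_r$ and its column sums the frequencies $y_{r'}$, so $A$ is a coupling of $x$ and $y$ with rational entries. Conversely, any coupling $A$ of $x,y$ with $nA$ integral is realised by partitioning a voter set of size $n$ according to the cells of $nA$. Under this dictionary
\[
\frac1n\sum_{v\in V} d(\pi_v,\sigma_v)^p=\frac1n\sum_{r,r'\in L(C)} B_{r,r'}\,d(r,r')^p=\sum_{r,r'\in L(C)} A_{r,r'}\,d(r,r')^p,
\]
so the quantity minimised above is exactly $\bigl(\sum_{r,r'} A_{r,r'} d(r,r')^p\bigr)^{1/p}$ as $A$ ranges over the rational couplings of $x$ and $y$. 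Hence $\overline{d_*^p}(x,y)^p$ equals the infimum of $\sum_{r,r'} A_{r,r'} d(r,r')^p$ over rational couplings $A$.

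It then remains to see that restricting to rational couplings does not change the infimum, so that this matches the definition of $d_W^p(x,y)^p$, which is an infimum over all couplings. The set of couplings of the rational points $x,y$ is a transportation polytope, cut out by linear equalities and inequalities with rational data; it is compact and all of its vertices are rational. Since the objective $A\mapsto\sum_{r,r'} A_{r,r'} d(r,r')^p$ is linear, its minimum over all couplings is attained at a rational vertex, which is in particular a rational coupling. Minimising over the smaller set of rational couplings can only raise the value, yet the minimiser just found already lies in that smaller set; the two infima therefore coincide and equal $d_W^p(x,y)^p$. Taking $p$-th roots yields $\overline{d_*^p}=d_W^p$.

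I expect the main obstacle to be this last step, the passage from rational to arbitrary couplings: one must invoke the rationality of the vertices of the transportation polytope (equivalently, the total unimodularity underlying transportation problems) to guarantee that the continuous optimum defining $d_W^p$ is actually achieved at a coupling arising from genuine profiles. A secondary point to handle with care is homogeneity, since Proposition~\ref{prop:homog simple} only applies once $d_*^p$ is known to be homogeneous, which relies on the averaging normalisation noted above.
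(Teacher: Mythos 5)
Your proof is correct, and its first two stages coincide with the paper's: both apply Proposition~\ref{prop:homog simple} to reduce $\overline{d_*^p}$ to a minimization over pairs of profiles on a common voter set of size $n$, and both then regroup voters by their (old ranking, new ranking) pair to identify that minimization with minimizing the linear objective $\sum_{r,r'}A_{r,r'}d(r,r')^p$ over couplings $A$ whose entries have denominator $n$. Where you genuinely diverge is the last step, bridging rational couplings to arbitrary ones. The paper argues by approximation: since $d$ is finite, $\max_{r,r'}d(r,r')<\infty$, so for large $n$ any coupling can be approximated entrywise by one with denominator $n$ while changing the objective arbitrarily little, and a strict gap between the two infima is thus contradicted. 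You instead invoke a polyhedral fact: the coupling (transportation) polytope of the rational marginals $x,y$ is compact and cut out by rational linear data, so its vertices are rational, and the linear objective attains its minimum at such a vertex. Both arguments are valid; yours is cleaner and gives slightly more, namely that the Wasserstein infimum is \emph{attained} at a rational coupling (so both infima are minima), while the paper's density argument proves only equality of infima but would survive in settings lacking polyhedral structure. One further point in your favour: you take $d_*^p(\pi,\sigma)=\bigl(\tfrac1n\sum_v d(\pi_v,\sigma_v)^p\bigr)^{1/p}$, i.e., you divide the $\ell^p$ norm by $n^{1/p}$, and you are right that this is essential --- it is the normalization that makes $d_*^p$ homogeneous for $p>1$ and is what the paper's own computation implicitly uses, even though Remark~\ref{rem:normalize} literally divides by $n$, which is homogeneous only when $p=1$ (with that literal normalization the quotient distance would collapse to $0$ for $p>1$, and the proposition would fail). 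Your explicit attention to this hypothesis of Proposition~\ref{prop:homog simple} is not pedantry but a necessary repair.
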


\begin{proof}
Write $S = L(C)$. Since $d_*^p$ is anonymous and homogeneous, Proposition~\ref{prop:homog simple} implies that $\overline{d_*^p}(x,y)=\min_{E,E'}d_*^p(E,E')$, where
$\overline{E} = x$, $\overline{E'}=y$ and $|V(E)| = |V(E')|$.
Fix $n\geq 1$ and let $E=(C,V,\pi)$ and $E'=(C,V,\pi')$ denote elections with $n$ voters such that $\overline{E} = x$, $\overline{E'}=y$.
Then
\[\overline{d_*^p}(x,y)^p=\min_{\pi'}\frac{1}{n}\sum_i
d(\pi_i,\pi'_i)^p \geq \frac{1}{n}\sum_{r,r'\in S} \sum_{\{i:\pi_i=r,\pi'_i=r'\}}
d(r,r')^p=\sum_{r,r'\in S}\frac{a_{r,r'}}{n}d(r,r')^p,\]
where the $a_{r,r'}$'s are nonnegative integers such that for all $r\in
S,\sum_r\frac{a_{r,r'}}{n}=x_r$ and for all $r'\in
S,\sum_{r'}\frac{a_{r,r'}}{n}=y_{r'}$, which corresponds to the
Wasserstein distance restricted to matrices $A$ respecting the conditions
and with coefficients of the form $\frac{k}{n}$ with $0\leq k\leq n$ (in other words, a rational coupling with restricted denominators). So
clearly, $d^p_W(x,y)\leq\overline{d_*^p}(x,y)$.

Let assume that this inequality is strict. Then there is a coupling $A'$ (not all of whose entries are rational) with 
$\sum_{r,r'\in S}a'_{r,r'}d(r,r')^p + \epsilon/2 < \sum_{r,r'\in S}a_{r,r'}d(r,r')^p$ for all rational couplings $A$. However since $\max_{r,r'}d(r,r')<\infty$ and we can choose $n$ as large as we want, $d^p_W(x,y)$
can be approximated arbitrarily closely, since we can approach all entries of $A'$ simultaneously arbitrarily closely by a rational matrix satisfying the coupling constraints. This contradiction yields the final result.
\end{proof}

\begin{eg}
\label{eg:transport}
Let $d_1$ be the $\ell^1$ distance between probability measures on $L(C)$ and let $d = d_H$. 
Then $\overline{d_*^1} = \frac{1}{2} d_1$ (also called the \hl{total variation distance}). 
This was observed
(without the current notation) in  \cite[Lemma 3.2]{MPR2013}: if $E, E'$
are elections on $(C,V)$ with $|V| = n$, then 
$$
d_1(\overline{E}, \overline{E'}) \leq \frac{2}{n} d_H(E, E')
$$
and given $\overline{E}, \overline{E'} \in \simp^\rats_M$, we can choose
$n$ and the preimages $E,E'$ so that equality holds. 
\end{eg}

\begin{eg}
\label{eg:wasserstein}
If
$x\in \votesits$ has $2$ ``$abc$ voters" ( voters with preferences $a \succ b \succ c$) and $3$ $bac$ voters, while $y$ has $2$ $bac$ voters and $3$ $cba$ voters, the quotient distance corresponding to the normalized version of $d_H^1$ is $3/5$. This is because we must change at least $3$ of the $5$ voters to convert $x$ to $y$ (switch both $abc$ voters to $cba$, and one of the $bac$ voters to $cba$). For
the Kemeny metric $d_K^1$, the analogous quantity is $8/5$, because we must make at least $8$ swaps of adjacent candidates to convert $x$ to $y$ (switching $abc$ to $cba$ requires $3$ swaps, and switching $bac$ to $cba$ requires $2$ swaps; similarly, switching both $abc$ votes to $bac$ and all three $bac$ to $cba$ requires $8$ swaps).

\end{eg}

\section{Tied sets and decisiveness} 
\label{s:ties}

All social  rules used in practice encounter the problem of breaking
ties. However, many commonly used social rules have the property that
the subset of profiles where a unique output is not determined is ``small" (for example asymptotically negligible as $n\to
\infty$, for fixed $m$). This is important, because if the region where ties occur
is small, then ties can be ignored for many purposes, whereas if that
region is asymptotically large, our rule may suffer extreme lack of
decisiveness. 

In the worst case, the rule may do nothing, and simply return all possible 
$s$-rankings at every profile, making it useless. In the DR framework, this extreme indecision cannot occur, because some consensus set must be 
nonempty. However, it is certainly possible to have ``large" subsets of profile space on which a DR 
rule is not single-valued. We investigate this question for simplex rules, giving  both  positive (few ties)  
and negative (many ties) results.

\subsection{Boundaries}
\label{ss:boundary}

We formalize the concept of ``tied region" in our geometric context.

\begin{defn} \label{definition:boundary}
The \hl{boundary} of the social rule $R$ is the set of all elections $E$ at which $R(E)$ is of size at least $2$.
\end{defn}

\begin{eg} \label{example:bdry cond 2}
Suppose that $m=2$, with alternatives $a$ and $b$,  $\cons =
\cond$, and $d$ is an anonymous neutral standard distance.  The concepts
of majority winner and Condorcet winner coincide when $m=2$. When $n$ is
odd, the boundary $\R(\cons, d)$ is empty, whereas when $n$ is even, the
elections having an equal number of $ab$ and $ba$ voters are in the
boundary. 

On the other hand, for the simplex rule $\R(\overline{\cond}, \overline{d})$, the boundary is the point $(1/2, 1/2)$.
\end{eg}

Our intuition is that the boundary of a well-behaved simplex
rule should be ``small" in $\simp_M$, and be geometrically ``nice". The
relevant geometric theory is that of \emph{Voronoi diagrams}. We now
digress to review some known results, which will help develop a more refined intuition.

\subsection{Geometric background} \label{ss:geom2}

Voronoi theory is usually defined for a metric space, and most commonly for a \hl{Minkowski space} (a finite-dimensional real normed vector space). All definitions below work for an arbitrary metric space. The theory can no doubt be generalized to distances, but we do not deal with maximum generality here. Our main interest is in explaining that there are several interesting reasons why DR rules defined by $\ell^p$-votewise distances may fail to be decisive.

For a fixed set of
of \hl{sites} (subsets of the entire space), the open \hl{Voronoi cell}
of each site $X$ is defined as the set of points closer to $X$ than to any
other site. The boundaries of these cells are contained in the union of
bisectors, where a \hl{bisector} of the sites $X$ and $Y$, denoted $\bis(X, Y)$, is defined to be the set of points equidistant from
those two sites. 

Interpreting the sites as the consensus sets $\cons_r$, we see that the open Voronoi cell 
corresponding to $\cons_r$ is precisely the set on which $\R(\cons, d)$ is single-valued with 
value $r$. Also, the boundary as defined above
is just the union of boundaries of all Voronoi cells. 

We first discuss bisectors, because if these are well-behaved, so will the boundary of
the rule be. We first restrict to the nicest situation (where our intuition is strongest),
namely where the space is $\reals^n$ under the Euclidean $\ell^2$ norm. Suppose first that the sites are all distinct points. In this case for each pair of distinct sites $X$ and $Y$, the bisector $\bis(X,Y)$ is a
hyperplane normal to the line joining the points. The Voronoi cells
are therefore convex polyhedra that tile the entire space.

However, this situation is rather special. In fact 
$\bis(X,Y)$ is a hyperplane for all $X$ and $Y$ if and only if the space
is indeed Euclidean (in other words the norm is $\ell^2$)  \cite{Mann1935}.  Thus we should expect to see bisectors that
are not hyperplanes. Of course, such bisectors may still be well-behaved, for example smooth hypersurfaces. Note that $\bis(X,Y)$ is known to be homeomorphic to a hyperplane provided
the norm is strictly convex (recall that a norm is \hl{strictly convex} if its unit sphere contains no line segment)
and even sometimes when it is not
\cite{Horv2000}. This does not preclude nasty
behaviour such as two bisectors intersecting in arbitrarily complicated ways, but for norms defined algebraically, such as $\ell^p$, that does not happen. 

When the sites are not single points (in particular when they are not separated), bisectors may be poorly behaved even in $\ell^2$ (see Example~\ref{example:common end}). 
We conclude that well-behaved bisectors should not be expected in general, and we explore this in the next section.

\subsection{Large boundary}
\label{ss:large}

The most obvious way for a rule to be rather indecisive is if the underlying distance does not distinguish points well. We say that a subset of a Minkowski space (possibly defined with a seminorm) is \hl{large} if it contains an open ball, and \hl{small} otherwise.

\begin{eg} \label{example:bdry cond 3} (pseudometric)
Consider the Copeland rule whose boundary contains all points with no unique Copeland winner. This contains in particular the set where all candidates have the same Copeland score, for example because the majority tournament contains a cycle that includes all candidates. This is a large subset of $\simp_M$. To see this, note that in terms of coordinates, the  majority cycle is described by $\binom{m}{2}$ equations of the form $\sum_{i\in S_{ab}} x_i > \sum_{i\in S_{ba}} x_i$, where $S_{ab}$ denotes the set of rankings for which $a$ is above $b$. The tied set contains a sufficiently small neighborhood of every point for which all inequalities are strict, because small changes to the proportions of voter types will not change the majority tournament. 
\end{eg}

In view of this example, we should require our distances to be quasimetrics. However, there are other more subtle problems that can occur. Note that in the next example, the distance is $\ell^2$ and is hence as nice as could be expected: a metric induced by a norm that is strictly convex, symmetric, and algebraically defined. The problem is that the consensus notion is wrong --- intuitively, consensus sets should be separated, because otherwise how could the consensus choice be uncontroversial? 

\begin{eg}(non-separated consensus)
\label{example:common end}
Let $m=3$ and consider $\simp_6$ with the usual $\ell^2$ metric (induced from $\reals^{5}$ which coordinatizes $H_6$ in the usual way).

Consider the half-open line segments $L_1, L_2$ that join the center $P$
of $\simp_6$ to the points $x_{abc} = 1$  and $x_{bca} = 1$ respectively (Figure~\ref{fig:simp} gives some intuition in lower dimension). Each contains the endpoint on the boundary, but neither contains the center of the simplex.

Define a $1$-consensus by letting $\cons_a = L_1, \cons_b = L_2$ and $\cons_c$ be the single point
$x_{cab} = 1$.  Let $H_1, H_2$ be the hyperplanes (in $\simp_6$, i.e. having dimension $4$) normal to $L_1, L_2$ at $P$. Let $S$ be the set of points in $\simp_6$ that lie 
on the other side of $H_1$ from $L_1$ and on the other side of $H_2$ from $L_2$. Then each point of $S$ lies
on the bisector of $L_1$, $L_2$ with respect to the usual
$\ell^2$ metric $d$, because the closest point of $L_1$ is $P$ and this is the closest point of $L_2$. Every point of $S$ that is closer to the centre of
$\simp_6$ than to $\cons_c$ (this includes the point $L_1\cap L_2$) is in the boundary of the social choice rule $\R(\cons, d)$,
which is therefore large. Note that $\cons$ satisfies 
anonymity and homogeneity,  but not neutrality. Furthermore, every $\cons_r$ is a convex 
polyhedral subset of the simplex. 

If instead we define $\cons$ symmetrically by letting $\cons_c$ be the line segment $L_3$ joining the point $x_{cab} = 1$ to the centre of the simplex, then although the bisectors are large on $\simp_6$, the boundary of the rule is small. This is because points in $S$ are now closer to $L_3$ than either $L_1$ or $L_2$. 

Also note that if the line segments $L_1$ and $L_2$ did not approach arbitrarily closely, the bisectors would all be small.

\end{eg}

\begin{figure}
\caption{Illustration for Example~\ref{example:common end}.}
\label{fig:simp}

\begin{tikzpicture}[scale=5]
\draw (0,1) node[anchor=east]{$a$} -- (1,0) node[anchor=west]{$b$}-- (0,0)node[anchor=east]{$c$} -- cycle;
\filldraw[fill=black!20!white] (0.1667,0) -- (0.3333,0.3333)  -- (0,0.1667) -- (0,0)  -- cycle;
\draw[very thick] (0,1) -- (0.3333,0.3333);
\draw[very thick] (1,0) -- (0.3333,0.3333);
\draw[very thick] (0,0) node{};
\end{tikzpicture}
\end{figure}
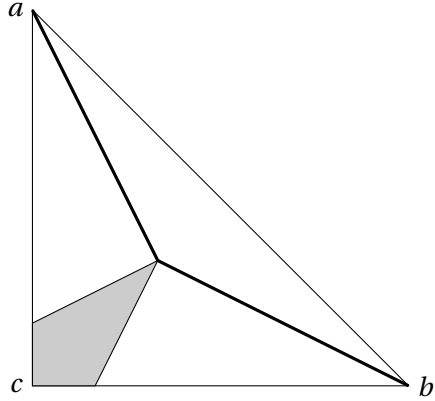

Thus we should require that consensus sets be separated. However, there is another common way in which bisectors can fail to behave well, which is when the underlying norm in a Minkowski space is not strictly convex. We now analyse a special case of this in some detail.

\section{Analysis of $\ell^1$-votewise metrics}
\label{s:ell one}

Votewise distances based on the $\ell^1$ norm are very commonly used. They are typically computationally easy and have a clear interpretation in terms of adding distances corresponding to each voter. In fact we are not aware of a named $\ell^p$-votewise rule that has been defined for any $p\neq 1$. However, when we consider decisiveness, there are some potential negative consequences to using the $\ell^1$ norm. 

We first show in
Proposition~\ref{pr:norm} and Proposition~\ref{pr:norm2} that each
$\ell^1$-votewise metric corresponds to a (Wasserstein) distance on the simplex that is induced by a norm that is not strictly convex. 

We fix a candidate set $C$ with $|C| = m$ and let $M=m!$ as in previous sections.
Let $c = (1,1,\dots, 1)/M \in \reals^M$ be the center of the simplex $\Delta_M$.
Now, we translate the center $c$ to the origin, and we denote by
$\Delta'$ the image of the simplex under this translation. We denote by
$\mathcal{H}$ the hyperplane containing $\Delta'$. Our study of the
geometry under the Wasserstein distance will be facilitated by the
following observations.

\begin{prop}
\label{pr:norm}
Let $d$ be an $\ell^1$-votewise distance. Then $d$ induces a norm $N$ on $\mathcal{H}$.
\end{prop}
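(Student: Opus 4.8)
The plan is to reduce everything to the case $p=1$ of the Wasserstein distance and to recognize that distance as coming from a norm. By the preceding proposition identifying the quotient distance with the Wasserstein distance, $\overline{d_*^1} = d^1_W$ on $\simp(S)$ with $S = L(C)$ and underlying metric $d$ on $S$ (a metric, as throughout this section). So it suffices to show that $d^1_W$ is a norm-induced distance on $\mathcal{H}$, i.e. that there is a norm $N$ with $d^1_W(x,y) = N(x-y)$. The essential point, special to $p=1$, is that the transport cost between $x$ and $y$ depends only on the difference $x-y$.

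First I would reformulate the transport problem as a minimum-cost flow. After translating the centre to the origin we have $\mathcal{H} = \{z \in \reals^M : \sum_r z_r = 0\}$, and for each $z \in \mathcal{H}$ I set
$$
N(z) = \min_f \sum_{r,r' \in S} f_{r,r'}\, d(r,r'),
$$
where $f$ ranges over the nonnegative matrices whose divergence $\sum_{r'}(f_{r,r'} - f_{r',r})$ equals $z_r$ for every $r$. The feasible set is nonempty for every $z \in \mathcal{H}$ and the minimum is attained (a finite linear program), so $N$ is well defined on all of $\mathcal{H}$. The inequality $N(x-y) \le d^1_W(x,y)$ is immediate: deleting the diagonal of any coupling $A$ of $x$ and $y$ yields a feasible flow of equal cost, since $d(r,r)=0$ and the divergence of the off-diagonal part is exactly $x-y$.

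The reverse inequality — reconstructing a genuine (nonnegative) coupling from an optimal flow — is the main obstacle, and is where the triangle inequality for $d$ enters. Given a feasible flow $f$, the natural coupling keeps the off-diagonal entries and fixes the diagonal to match the marginals, forcing $A_{r,r} = x_r - (\text{total outflow from } r)$, which can be negative when mass merely passes through $r$. I would rule this out by noting that an optimal flow can be chosen with no node acting simultaneously as source and sink: whenever $f$ routes mass both into and out of $r$, the inequality $d(a,r) + d(r,b) \ge d(a,b)$ lets us short-circuit through $r$ without increasing the cost, so at an optimum the total outflow from $r$ is at most $\max(x_r - y_r, 0) \le x_r$, giving $A_{r,r} \ge \min(x_r, y_r) \ge 0$. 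This produces a coupling of cost $N(x-y)$, hence $d^1_W(x,y) \le N(x-y)$ and the two agree. (As an alternative to the whole flow argument, Kantorovich--Rubinstein duality writes $d^1_W(x,y) = \sup\{\sum_r \phi_r (x_r - y_r) : |\phi_r - \phi_{r'}| \le d(r,r')\}$, which manifestly depends only on $x-y$ and is the support function of a symmetric convex body, yielding subadditivity and homogeneity at once.)

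Finally I would verify the norm axioms for $N$ on $\mathcal{H}$. Subadditivity $N(z_1 + z_2) \le N(z_1) + N(z_2)$ holds because the sum of (near-)optimal flows for $z_1$ and $z_2$ is feasible for $z_1 + z_2$ with the summed cost. Positive homogeneity $N(\lambda z) = |\lambda| N(z)$ follows by scaling flows when $\lambda > 0$, and for $\lambda < 0$ by reversing every edge ($f_{r,r'} \leftrightarrow f_{r',r}$), which maps feasible flows for $z$ to feasible flows for $-z$ and preserves cost precisely because $d$ is symmetric. Definiteness $N(z) = 0 \Rightarrow z = 0$ is where I use that $d$ is a metric: since $d(r,r') > 0$ for $r \ne r'$, a cost-zero flow carries no off-diagonal mass and so has zero divergence; equivalently, one may invoke that $d^1_W$ is a metric (already noted) after writing a general $z \in \mathcal{H}$ as $t(x-y)$ with $x,y$ probability distributions of disjoint support. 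Hence $N$ is a norm and $\overline{d}(x,y) = N(x-y)$, which is the assertion that $d$ induces a norm on $\mathcal{H}$.
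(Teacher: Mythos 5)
Your proof is correct, but it takes a genuinely different route from the paper's. The paper disposes of this proposition in a few lines by invoking the known theory of the Kantorovich--Rubinstein norm (citing Villani, Ch.~6): it notes that the Wasserstein $1$-distance is translation-invariant, that $x \mapsto d^1_W(x+c,c)$ is homogeneous on $\Delta'$, and that every translation-invariant, homogeneous metric is induced by a norm, obtained here by extending that function homogeneously to $\mathcal{H}$. You instead build the norm from scratch as the optimal value of a minimum-cost flow problem with prescribed divergence, and prove $d^1_W(x,y)=N(x-y)$ directly; your cancellation step --- using the triangle inequality to modify an optimal flow so that no node is simultaneously a source and a sink, which is exactly what makes the diagonal of the reconstructed coupling nonnegative --- is in effect a proof of the translation invariance that the paper merely asserts and cites. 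The norm axioms then fall out of flow operations: superposition gives subadditivity, scaling and edge-reversal give homogeneity (the reversal step using symmetry of $d$), and definiteness uses $d(r,r')>0$ for $r\neq r'$. What the paper's approach buys is brevity and a pointer to standard literature; what yours buys is a self-contained argument that records exactly where each hypothesis on $d$ enters (triangle inequality, symmetry, definiteness --- i.e.\ that $d$ is a metric, which the proposition's word ``distance'' elides but which you correctly flag, and without which one only gets a seminorm), together with an explicit description of $N$ that makes later computations, such as the failure of strict convexity in Proposition~\ref{pr:norm2}, transparent. Your parenthetical alternative via Kantorovich--Rubinstein duality is essentially the argument the paper's citation points to, so either of your two routes closes the gap the paper leaves to the reader.
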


\begin{proof} 
It is a well known property of the Wasserstein $1$-distance (the norm is called the Kantorovich-Rubinstein norm \cite[Ch 6]{Vill2008}). Explicitly, one first shows that any Wasserstein distance is translation-invariant. Then one shows that the function $f:x\mapsto d^1_W(x+c,c)$ is homogeneous on $\Delta'$. Since every translation-invariant and homogeneous metric is induced by a norm, $d$ induces a norm $N$ on 
$\mathcal{H}$ by setting $N(x)=f(x)$ on $\Delta'$ and then extending it to $\mathcal{H}$ by requiring it to be homogeneous.
\if01
By Proposition~\ref{transinv} it suffices to show that the transportation metric
is homogeneous (in the usual sense of the norm property) over $\Delta'$, and
as we just said, $N$ is its homogeneous extension to $\mathcal{H}$. We
want to show that for any $x\in\Delta'$ and $\lambda$ such that $\lambda
x\in\Delta'$, $d^1_W(\lambda x+c,c)=|\lambda|d^1_W(x+c,c)$. We fix now $\lambda$
and $x$. Let $A$ be a matrix achieving the minimum equal to $d^1_W(x+c,c)$
and satisfying the corresponding conditions. Let $A'$ be the matrix
with marginals $\lambda x+c$ and $c$ such that, for $r\not=r'$,
$A'_{rr'}=\lambda A_{rr'}$ if $\lambda>0$, and
$|\lambda|(^{\operatorname t}\!A)_{rr'}$ otherwise: because of the
conditions on the marginals, we get $\forall r, A'_{rr}=\min(\lambda
x_r+c_r,c_r)$. The matrix $A'$ is nonnegative if and only if $\lambda
x\in\Delta'$: it is nonnegative if and only for all $r$, $\min(\lambda
x_r+c_r,c_r)\geq0$ which occurs if and only if $\lambda x+c$ is in the
simplex. In addition, $\sum_{r,r'\in
S}A'_{r,r'}d_S(r,r')^p=|\lambda|\sum_{r,r'\in S}A_{r,r'}d_S(r,r')^p$, so
we obtain $d^1_W(\lambda x+c,c)\geq|\lambda|d^1_W(x+c,c)$.

On the other hand, $x=\frac{1}{\lambda}{\lambda x}$, and so, by the same
argument, we have that $d^1_W(\lambda x+c,c)\leq|\lambda|d^1_W(x+c,c)$: we have
the equality, and $N$ is a norm, yielding (i).
\fi
 \end{proof}

\begin{remark}
This result is not true for the other Wasserstein metrics, because they
do not satisfy the homogeneity property of norms. For example, let us choose
$z\in\Delta'$ such that $z_1\leq0$ and $\forall i\not=1,
z_{i}\geq0$. It is easy to show that the matrix $A$ reaching the minimum in the definition of $d_W^p(x,y)$ is such that $A_{rr}=min(x_r,y_r)$ for all $r$ and then $d_W^p(x,y)=d_W^p((x-y)^+,(y-x)^+)$ where $x^+=(\max(x_1,0),...\max(x_n,0))$. Then $d_W^p(z+c,c)^p=d_W^p((z_1,0,...0),(0,z_2,...z_n))=\sum_{r'}z_{r'}d(r,r')^p$, and 
$d_W^p(\lambda
z+c,c)=|\lambda|^{\frac{1}{p}}\sum_{r'}z_{r'}d(r,r')^p=
|\lambda|^{\frac{1}{p}}d_W^p(z+c,c)$.
\end{remark}

\begin{prop}
\label{pr:norm2}
The norm $N$ of Proposition~\ref{pr:norm} is not strictly convex.
\end{prop}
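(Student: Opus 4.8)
The plan is to exhibit an explicit nondegenerate line segment lying on the unit sphere of $N$; by the definition of strict convexity recalled in Section~\ref{ss:geom2} (the unit sphere contains no line segment) this settles the claim. The guiding intuition is that the $1$-Wasserstein cost is assembled additively out of independent mass transfers, exactly as the $\ell^1$ norm is, so that whenever two transport problems are solved ``along non-parallel directions'' their costs simply add, producing a flat piece of the sphere. Concretely, I would fix three distinct rankings $\rho_1,\rho_2,\rho_3\in L(C)$ — these exist precisely when $m\ge 3$, which is the only interesting regime since $\mathcal H$ is one-dimensional when $m=2$ — and observe that since $d$ is a metric on $L(C)$ (as needed for $N$ to be a norm in Proposition~\ref{pr:norm}) the distances $d(\rho_1,\rho_2)$ and $d(\rho_1,\rho_3)$ are positive. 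Writing $e_t$ for the simplex vertex corresponding to $t\in L(C)$, set $u=e_{\rho_1}-e_{\rho_2}$ and $u'=e_{\rho_1}-e_{\rho_3}$; both lie in $\mathcal H$ and are linearly independent (comparing the $e_{\rho_3}$-coordinates rules out proportionality).

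The whole proof then reduces to the single equation
\[
N(u+u')=N(u)+N(u').
\]
To establish it I would evaluate all three norms through the Wasserstein description of $N$. By translation invariance $N(u)=d^1_W(e_{\rho_1},e_{\rho_2})=d(\rho_1,\rho_2)$ and likewise $N(u')=d(\rho_1,\rho_3)$, since the only coupling of two point masses is the product coupling. For the left-hand side, the triangle inequality for the norm (or the transport plan moving the excess mass at $\rho_1$ equally towards $\rho_2$ and $\rho_3$) gives $N(u+u')\le d(\rho_1,\rho_2)+d(\rho_1,\rho_3)$. The matching lower bound is the one substantive step, and here I would use weak Kantorovich--Rubinstein duality: for every $1$-Lipschitz $f$ on $(L(C),d)$ one has $N(w)\ge\sum_r f(r)\,w_r$, and the choice $f=-d(\rho_1,\cdot)$, which is $1$-Lipschitz by the triangle inequality, yields $\sum_r f(r)(u+u')_r=d(\rho_1,\rho_2)+d(\rho_1,\rho_3)$. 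Hence both inequalities are equalities and the displayed identity holds.

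Finally I would convert this into a segment on the unit sphere. The vectors $u/N(u)$ and $u'/N(u')$ are distinct (as $u,u'$ are not proportional) unit vectors, and their convex combination $p=(u+u')/\bigl(N(u)+N(u')\bigr)$ satisfies $N(p)=N(u+u')/\bigl(N(u)+N(u')\bigr)=1$. Since $N$ is convex along the segment joining these two unit vectors, bounded by $1$ there, and attains the value $1$ at the interior point $p$, it is identically $1$ on the whole segment, which therefore lies on the unit sphere; thus $N$ is not strictly convex. I expect the main obstacle to be exactly the lower bound $N(u+u')\ge N(u)+N(u')$: the upper bound, the normalization, and the passage to a segment are routine, but certifying that the obvious transport plan is optimal requires producing the dual potential above (equivalently, arguing that the ``common mass stays put'' optimal plan cannot be improved upon).
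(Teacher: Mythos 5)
Your proof is correct, and it exploits the same underlying phenomenon as the paper's, but it certifies that phenomenon by a genuinely different (and more self-contained) argument. The paper fixes a ranking $r$ and works with the whole open cone $S_r\subset\mathcal{H}$ of points whose only negative coordinate is the one at $r$; it asserts that $N$ is linear there, $N(x)=\sum_{r'}x_{r'}d(r,r')$, so the unit sphere meets $S_r$ in a relatively open piece of a hyperplane and hence contains segments. Your configuration is a special case of this flat cone: since $N(-x)=N(x)$, the vectors $-u$, $-u'$ and $-(u+u')$ all lie in $S_{\rho_1}$, and your identity $N(u+u')=N(u)+N(u')$ is exactly the paper's linearity statement evaluated at two points. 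Where you genuinely diverge is in how the key fact is justified: the paper's linear formula amounts to the claim that the obvious transport plan (ship all surplus mass directly to the unique deficit site) is optimal, and it is stated without proof; you certify precisely this lower bound with the dual potential $f=-d(\rho_1,\cdot)$ and weak Kantorovich--Rubinstein duality, which is an elementary coupling computation and fills in the one step the paper leaves implicit. In exchange, the paper's version is more global --- it exhibits an entire flat facet of the sphere, of codimension one in $\mathcal{H}$, not just one segment. Two further points in your favour: you correctly flag that $m\ge 3$ is needed (for $m=2$ the space $\mathcal{H}$ is one-dimensional and every norm there is vacuously strictly convex, a boundary case the paper's statement silently assumes away), and your closing convexity argument (equality at an interior point of a chord forces the whole chord onto the sphere) is sound. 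One small caveat to make explicit: the evaluation $N(u)=d^1_W(e_{\rho_1},e_{\rho_2})=d(\rho_1,\rho_2)$ requires first scaling $u$ into $\Delta'$ and invoking translation invariance, since $u+c$ itself is not a probability vector; the same scaling remark applies to your application of weak duality to $u+u'$. These are routine by homogeneity of $N$, but the write-up should say so.
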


\begin{proof} We need to show that the unit sphere is not strictly convex. Fix a ranking $r\in L(C)$ and consider the subset $S_r$ of all
points $x\in \mathcal{H}$ where only the component corresponding to $r$ is negative. In
$S_r$, we have 
$$ N(x) = \sum_{r'} x_{r'} d(r, r'). $$ 
Thus the equation $N(x)=1$ of the unit sphere defines a hyperplane in $S_r$. Since $S_r$ is large, the unit ball is not strictly convex. 
 \end{proof}

We can now show that for any distance $d$, there are two distinct points whose bisector under the Wasserstein distance $d^1_W$ is large.

\begin{prop}
\label{prop:large ell one}
Consider a norm $N$ induced over $\mathcal{H}$ by an $\ell^1$-votewise
metric. Let $r,r_1,r_2$ be rankings. We denote by $d_1$ and $d_2$ the
distances $d(r,r_1)$ and $d(r,r_2)$. Let $\epsilon>0$. We define $x$ and
$y$ as the two points of $\mathcal{H}$ such that
$x_r=-x_{r_1}=\frac{\epsilon}{d_1}$, $y_r=-y_{r_2}=\frac{\epsilon}{d_2}$
and all other components are equal to zero. Then, any point
$z\in\mathcal{H}$ such that $z_r\leq0$ and  $z_{r'}\geq0$ for all $r'\not=r$ is
equidistant from $x$ and $y$ according to $N$. 
\end{prop}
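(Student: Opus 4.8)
The plan is to reduce everything to the explicit piecewise-linear formula for $N$ established in the proof of Proposition~\ref{pr:norm2}. Recall that on the cone $S_r$ of points $w\in\mathcal{H}$ whose only negative component is the one at $r$, the norm is given by the linear expression $N(w)=\sum_{r'} w_{r'}\,d(r,r')$. Since $N$ is continuous and the right-hand side is linear, this identity extends to the closure $\overline{S_r}=\{w\in\mathcal{H}: w_r\leq 0,\ w_{r'}\geq 0 \text{ for all } r'\neq r\}$. The whole point is that both $z-x$ and $z-y$ land in this closed cone, after which the claimed equidistance $N(z-x)=N(z-y)$ is a one-line linear computation.

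First I would check the cone membership. Because $z_r\leq 0$ and $x_r=\epsilon/d_1>0$, the $r$-component of $z-x$ is strictly negative; the only other nonzero component of $x$ is $x_{r_1}=-\epsilon/d_1$, so $(z-x)_{r_1}=z_{r_1}+\epsilon/d_1\geq 0$, and every remaining component equals $z_{r'}\geq 0$. Hence $z-x\in\overline{S_r}$, and the identical argument with $(r_1,d_1)$ replaced by $(r_2,d_2)$ gives $z-y\in\overline{S_r}$.

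Applying the linear formula on $\overline{S_r}$ then yields
\begin{align*}
N(z-x) &= \sum_{r'} z_{r'}\,d(r,r') - \sum_{r'} x_{r'}\,d(r,r'), \\
N(z-y) &= \sum_{r'} z_{r'}\,d(r,r') - \sum_{r'} y_{r'}\,d(r,r').
\end{align*}
Here the scaling built into the definitions of $x$ and $y$ does all the work: in the $x$-sum the $r$-term vanishes (since $d(r,r)=0$) and the only surviving term is $x_{r_1}d(r,r_1)=(-\epsilon/d_1)\,d_1=-\epsilon$, so $\sum_{r'} x_{r'}d(r,r')=-\epsilon$; symmetrically $\sum_{r'} y_{r'}d(r,r')=y_{r_2}d_2=-\epsilon$. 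Both subtracted sums equal $-\epsilon$, so $N(z-x)=N(z-y)=\sum_{r'} z_{r'}d(r,r')+\epsilon$, proving the equidistance.

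The hard part is not really hard: the only subtlety I would flag explicitly is the passage to the closure $\overline{S_r}$ when invoking the linear formula of Proposition~\ref{pr:norm2}, justified by continuity of the norm. The conceptual heart of the construction is the normalization $x_r=-x_{r_1}=\epsilon/d_1$ (rather than a fixed $\pm\epsilon$): it is precisely this choice that cancels the dependence on $d_1$ and $d_2$ and forces the two contributions to coincide, which is what produces a \emph{large} bisector and hence the failure of strict convexity to be benign.
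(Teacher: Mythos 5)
Your proof is correct and follows essentially the same route as the paper's: both reduce to the piecewise-linear formula $N(w)=\sum_{r'}w_{r'}d(r,r')$ on the cone $S_r$ from Proposition~\ref{pr:norm2}, check that the relevant difference vectors lie in that cone, and observe that the normalization $x_{r_1}d_1=y_{r_2}d_2=-\epsilon$ makes the two linear expressions agree. The only cosmetic difference is that you work with $z-x$ and $z-y$ (which lie in $S_r$ directly), whereas the paper writes $x-z$ and $y-z$ and implicitly uses symmetry of $N$; your version is, if anything, slightly cleaner on signs.
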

\begin{proof}
Let $z$ be such a point. Then, $x-z$ and $y-z$ have only one positive
component: the one corresponding to the ranking $r$. So
$N(x-z)=\sum_{r'\not=r}(x_{r'}-z_{r'})d(r,r')$ and
$N(y-z)=\sum_{r'\not=r}(y_{r'}-z_{r'})d(r,r')$. Since the only components (different from $r$)
where $x$ and $y$ differ are $r_1$ and $r_2$, they are equidistant from
$z$ if and only if
$(x_{r_1}-z_{r_1})d_1+(x_{r_2}-z_{r_2})d_2=(y_{r_1}-z_{r_1})d_1+(y_{r_2}
-z_{r_2})d_2$, which is equivalent to
$x_{r_1}d_1+x_{r_2}d_2=y_{r_1}d_1+y_{r_2}d_2$, which is in turn
equivalent to $x_{r_1}d_1=y_{r_2}d_2$, which is true by definition. 
 \end{proof}
 
It follows that the behaviour of $\ell^1$-votewise distances is rather counterintuitive.

\begin{cor} \label{cor:large bis}
Let $d$ be an $\ell^1$-votewise metric. Then there is a consensus $\cons$
consisting of isolated points, such that the boundary of $\R(\cons, d)$ is
large. 
\end{cor}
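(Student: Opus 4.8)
The plan is to realize a large boundary using the simplest possible consensus: just two isolated sites whose bisector is forced to be large by the failure of strict convexity already established in Proposition~\ref{prop:large ell one}. I would work in the translated picture, so that $\mathcal{H}$ carries the norm $N$ induced by $d^1_W$ and $\Delta'$ is the translated simplex. Since $M = m! \geq 3$ (taking $m\geq 3$), I would fix three distinct rankings $r, r_1, r_2$ and, for a small parameter $\epsilon>0$, let $x$ and $y$ be the two points of $\mathcal{H}$ defined exactly as in Proposition~\ref{prop:large ell one}, namely $x_r = -x_{r_1} = \epsilon/d_1$ and $y_r = -y_{r_2} = \epsilon/d_2$ with all other coordinates zero. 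Choosing $\epsilon$ small enough (it suffices that $\epsilon \le \min(d_1,d_2)/M$) guarantees that $x+c$ and $y+c$ lie in $\simp_M$, so that $x,y$ are genuine points of the simplex.

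I would then define $\cons$ by setting $\cons_{\rho_x} = \{x\}$ and $\cons_{\rho_y} = \{y\}$, where $\rho_x\neq\rho_y$ are any two distinct admissible outcomes (for $s=m$ one may simply take two distinct full rankings). This $\cons$ consists of two isolated points, and since $N$ is a norm with $x\neq y$ we have $N(x-y)>0$, so $(\cons, N)$ distinguishes consensus choices. The key observation is that, because there are only two sites, the rule $\R(\cons, N)$ assigns each point $z$ to whichever of $x, y$ is nearer in $N$, so its boundary is exactly the bisector $\bis(x,y) = \{z\in\mathcal{H} : N(z-x) = N(z-y)\}$, with no interference from other consensus cells to worry about.

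It then remains to see that this bisector is large inside the simplex. By Proposition~\ref{prop:large ell one}, $\bis(x,y)$ contains every $z\in\mathcal{H}$ with $z_r\le 0$ and $z_{r'}\ge 0$ for all $r'\neq r$; the interior of this set is the open set $S_r$ from the proof of Proposition~\ref{pr:norm2}, which is large. To exhibit an open ball that lies simultaneously in the simplex and in this equidistant region, I would take the witness $z$ with $z_r = -\delta$ and $z_{r'} = \delta/(M-1)$ for $r'\neq r$; for $\delta$ sufficiently small this $z$ lies in the open simplex and satisfies all the strict inequalities, so a small enough $N$-ball about $z$ stays inside both $\Delta'$ and $S_r$, hence inside the boundary. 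Thus the boundary of $\R(\cons, N)$ contains an open ball and is large.

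I expect the only real care to be the verification in the last step that one open ball can be fitted simultaneously inside the simplex and inside the large equidistant region; the genuinely nontrivial content, that the equidistant set has nonempty interior in $\mathcal{H}$, is already supplied by Proposition~\ref{prop:large ell one} together with the non-strict-convexity argument of Proposition~\ref{pr:norm2}. Restricting to two sites is what lets me avoid any interference from other cells, and it also sharpens the contrast with Example~\ref{example:common end}: here the sites are isolated and well separated, so the large boundary is caused purely by the failure of strict convexity of the $\ell^1$-induced norm $N$, not by consensus sets that approach one another.
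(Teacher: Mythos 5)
Your proof is correct and takes essentially the same approach as the paper's: both arguments rest entirely on Proposition~\ref{prop:large ell one}, placing singleton consensus sets at the two points $x_{\epsilon}, y_{\epsilon}$ constructed there and concluding that the boundary of $\R(\cons,d)$ contains the large equidistant region intersected with the simplex. The only differences are minor: the paper additionally assigns singleton consensus sets $\{x_{\epsilon_c}\}$ with $\epsilon_c>\epsilon$ to the remaining candidates and checks that points of the equidistant region are strictly closer to $\cons_a$ and $\cons_b$ than to any $\cons_c$, whereas you leave all other consensus sets empty (legitimate, since Definition~\ref{def:cons} only requires a partial social function, so empty sets receive infinite DR score and cause no interference), and your explicit verification that an open ball fits simultaneously inside the bisector and inside the simplex is a detail the paper leaves implicit.
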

\begin{proof}
Write $x_{\epsilon}$ and $y_{\epsilon}$ for points $x$ and $y$ of the
form defined in Proposition~\ref{prop:large ell one}. That proposition
implies that, if we set $\cons_a=\{x_{\epsilon}\}$ and
$\cons_b=\{y_{\epsilon}\}$ and choose a sufficiently small $\epsilon$,
then $\bis(\cons_a,\cons_b)$ will be large. Also, for any other candidate $c$,
if we set $\cons_c=\{x_{\epsilon_c}\}$ with $\epsilon<\epsilon_c$, then  for
any $z$ such that $z_r$ is the only negative component, $N(z,
\cons_a)=N(z,\cons_b)<N(z,\cons_c)$.
 \end{proof}

\begin{remark}
Note that the consensus in the proof of Corollary~\ref{cor:large bis} is
somewhat unnatural. For example, it is not neutral and does not intersect the boundary of 
$\simp_M$.
\end{remark}

The next question is how often this kind of situation happens. For
simplicity we focus on the case  $d=d_H$, when the induced norm is exactly $\ell^1$. We can give an exact
characterization of when two points have a large bisector. This is
directly connected with the well-known  integer partition problem.

\begin{prop}
\label{bis1}
Let $M\geq 1$ and let $x, y\in\mathbb{R}^M$. We denote by $S$ the
set of values $(x_i-y_i)$. Then $x$ and $y$ have a large bisector under 
$\ell^1$ if and only if there exists a subset $S'\subset S$ such that
$\sum_{e\in S'}e=\sum_{e\not\in S'}e$. 
\end{prop}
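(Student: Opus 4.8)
The plan is to reduce the computation of the $\ell^1$ bisector to a coordinate-by-coordinate analysis, exploiting that the $\ell^1$ distance is a sum of one-dimensional terms. Define $f \colon \reals^M \to \reals$ by
$$f(z) = ||z - x||_1 - ||z - y||_1 = \sum_{i=1}^M g_i(z_i), \qquad g_i(t) = |t - x_i| - |t - y_i|.$$
Then $\bis(x,y) = \{z : f(z) = 0\}$, and this set is \emph{large} exactly when $f$ vanishes identically on some nonempty open box.

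First I would record the shape of each summand. Writing $e_i = x_i - y_i$, the function $g_i$ is continuous and piecewise linear: it equals the constant $e_i$ for $t < \min(x_i,y_i)$, equals $-e_i$ for $t > \max(x_i,y_i)$, and is affine with slope $\pm 2$ on the interval between; in particular $g_i \equiv 0$ when $e_i = 0$. Thus the two unbounded ``plateau'' rays of $g_i$ realize exactly the two values $e_i$ and $-e_i$, and choosing a sign $\sigma_i \in \{+1,-1\}$ selects one of them.

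The core of the argument is the equivalence: $\bis(x,y)$ is large if and only if there is a sign vector $\sigma \in \{+1,-1\}^M$ with $\sum_i \sigma_i e_i = 0$. For the ``if'' direction, given such a $\sigma$ I would take the open box that is the product, over indices with $e_i \neq 0$, of the corresponding plateau ray, and over indices with $e_i = 0$, of an arbitrary bounded open interval; this box contains an open ball, and on it $f = \sum_{e_i \neq 0} \sigma_i e_i = 0$. For the ``only if'' direction, suppose $f \equiv 0$ on an open ball $U$. Since the union of the breakpoint hyperplanes $\{z_i = x_i\}$ and $\{z_i = y_i\}$ is closed with empty interior, I can shrink to an open box $U' \subseteq U$ that avoids them; there $f$ is affine and constant, so $\partial f/\partial z_i = g_i'(z_i) = 0$ for every $i$, forcing each coordinate into a plateau of $g_i$. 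Reading off the plateau signs gives $\sigma$ with $\sum_i \sigma_i e_i = f|_{U'} = 0$. The step carrying the real content is this last implication: because $f$ is a sum of univariate functions, constancy on an open box forces each summand to be individually constant, so no cancellation of nonzero slopes across coordinates is possible.

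It remains to restate the sign condition as a partition condition. Setting $S' = \{i : \sigma_i = +1\}$, we have $\sum_i \sigma_i e_i = \sum_{i \in S'} e_i - \sum_{i \notin S'} e_i$, which vanishes precisely when $\sum_{i \in S'} e_i = \sum_{i \notin S'} e_i$. Interpreting $S$ as the family of values $e_i = x_i - y_i$ (repeated values kept distinct by their index), subsets $S' \subset S$ correspond to sign choices, and indices with $e_i = 0$ may be assigned to either side without affecting the equation, matching the displayed condition. This yields the stated equivalence.
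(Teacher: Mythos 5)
Your proof is correct and takes essentially the same route as the paper's: your coordinate-wise plateau analysis of the summands $g_i$ (the rays $z_i<\min(x_i,y_i)$ and $z_i>\max(x_i,y_i)$ versus the middle interval of slope $\pm2$) is exactly the paper's case distinction on the signs $\epsilon_i,\epsilon_i'$ of $x_i-z_i$ and $y_i-z_i$, and both arguments reduce largeness of the bisector to the existence of signs with $\sum_i\sigma_i e_i=0$, i.e.\ the partition condition. If anything, your ``only if'' direction is slightly tighter than the paper's, which leaves implicit why a bisector covered by finitely many hyperplane pieces cannot contain a ball, whereas you obtain this directly by shrinking the ball to a box avoiding the breakpoint hyperplanes and noting that constancy of a sum of univariate functions on a box forces each summand onto a plateau.
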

\begin{proof}
By definition $\bis(x,y)=\{z|\sum_i|x_i-z_i|=\sum_i|y_i-z_i|\}$. We
divide $\mathbb{R}^M$ into $4^M$ subspaces corresponding to the possible
signs of the values $(x_i-z_i)$ and $(y_i-z_i)$. Let $V$ be one of
these subspaces: in $V$, the equality $\sum_i|x_i-z_i|=\sum_i|y_i-z_i|$
is equivalent to $\sum_i\epsilon_i(x_i-z_i)=\sum_i\epsilon_i'(y_i-z_i)$,
where $\forall i, \epsilon_i,\epsilon'_i=\pm1$. This is 
equivalent to $\sum_i (\epsilon_i - \epsilon'_i) z_i = \sum_{i} (\epsilon_i x_i - \epsilon'_i y_i)$. 

There are two cases. First, if for some $i$, $\epsilon_i\not=\epsilon_i'$, then the linear equation in $z$ is nontrivial and $z$ lies in a hyperplane, so that $V\cap \bis(x,y)$ is small. The other case is when $\epsilon_i=\epsilon_i'$ for all $i$, in which case the left side of the equation is $0$. If the right side is 
nonzero there is no solution, and $V\cap \bis(x,y) = \emptyset$. If the right side is zero, then $V\cap \bis(x,y)$ is large (for each $i$, it contains all points for which $z_i$ is sufficiently large, for example).
The right side is zero if and only if
$\sum_i\epsilon_i(x_i-y_i)=0$, which is equivalent to the fact that
there exists $S'\subset S, \sum_{e\in S'}e=\sum_{e\not\in S'}e$.
\end{proof}

The argument in the proof gives insight into the shape of any large bisector of two points: any ball included in the bisector
is contained in cells where $(x_i-z_i)$ and $(y_i-z_i)$ are of the same sign, and thus in a subset defined by a set of equations
$z_i\leq\min(x_i,y_i)$ or $z_i\geq\max(x_i,y_i)$ for all $i$. It
implies, for example, that if the points are corners of the simplex, 
the large bisector in $\reals^M$ intersects $\simp_M$ in a small set.  Thus, for example, large boundaries cannot occur with
$\sunam$ (which also follows from Corollary~\ref{cor:hyperplane} below).

\begin{defn}
\label{definition:decision}
The standard decision problem PARTITION is defined as follows. Input is a vector $(z_1, \dots, z_M)$ of natural numbers. We must decide whether there is a subset $S\subseteq \{1, \dots, M\}$ for which $\sum_{i\in S} z_i = \sum_{i\not\in S} z_i$.

Define the decision problem LARGE-BISECTOR as follows. Input is a pair
$(x,y)$ of  points of $\rats^M_{+}$ and we must decide whether
$\bis(x,y)$ contains an open ball under the $d^1_H$ metric. 
\end{defn}

\begin{remark}
Note that $M$ is part of
the input in each case. When $M$ is bounded, PARTITION can be solved  trivially by exhaustive enumeration of subsets. Note that if we let $K:=\sum_i x_i$, then a standard dynamic programming algorithm solves PARTITION in $O(KM)$ time.
\end{remark}

\begin{prop}
LARGE-BISECTOR is NP-complete. 
\end{prop}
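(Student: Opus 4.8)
The plan is to prove NP-completeness by establishing membership in NP and then giving a reduction from PARTITION, which Proposition~\ref{bis1} essentially hands us. First I would show LARGE-BISECTOR $\in$ NP. By Proposition~\ref{bis1}, the pair $(x,y)$ has a large bisector under $\ell^1$ if and only if the multiset $S$ of differences $(x_i - y_i)$ can be split into two parts with equal sum. A certificate is thus simply a subset $S' \subseteq \{1, \dots, M\}$; given it, one verifies in polynomial time that $\sum_{i \in S'}(x_i - y_i) = \sum_{i \notin S'}(x_i - y_i)$. Since the $x_i, y_i$ are rationals given in the input, this arithmetic check is polynomial in the input size, so membership in NP is immediate.

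For NP-hardness I would reduce from PARTITION, which is the classic NP-complete problem. Given an instance $(z_1, \dots, z_M)$ of natural numbers, I would construct in polynomial time a pair $(x, y) \in \rats^M_+ \times \rats^M_+$ whose difference vector realizes the given $z_i$. The cleanest choice is to set $x_i = z_i$ and $y_i = 0$ for each $i$, so that $x_i - y_i = z_i$; these are nonnegative rationals as required. By Proposition~\ref{bis1}, $\bis(x,y)$ is large under $d^1_H$ if and only if there is a subset $S'$ with $\sum_{e \in S'} e = \sum_{e \notin S'} e$, where $S$ is the set of values $z_i$. This is exactly a yes-instance of PARTITION. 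Hence the reduction is correct, and it is plainly computable in time linear in the input, completing the hardness argument.

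The one subtlety I would be careful about is whether ``the set of values $S$'' in Proposition~\ref{bis1} is meant as a set or a multiset: if distinct indices $i$ can share the same value $z_i$, the proposition's subset-of-$S$ condition must be read index-wise (i.e.\ as a partition of the index set $\{1, \dots, M\}$) rather than as choosing from distinct values, since otherwise repeated values would be conflated and the equivalence with PARTITION on $(z_1, \dots, z_M)$ could break. Reading the proof of Proposition~\ref{bis1}, the sign pattern $\epsilon_i = \pm 1$ is assigned per index $i$, so the intended reading is index-wise, and the reduction goes through verbatim. The main obstacle, such as it is, lies only in stating this reduction with the correct index-versus-value convention; the NP-hardness itself is a direct transcription of Proposition~\ref{bis1} and requires no new geometric input.
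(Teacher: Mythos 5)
Your proof is correct and follows essentially the same route as the paper: NP-hardness via the reduction $x_i = z_i$, $y_i = 0$ from PARTITION, and NP membership by using the subset criterion of Proposition~\ref{bis1} as a polynomial-size, polynomially checkable certificate. Your added remark about reading the criterion index-wise (as a multiset/partition of indices rather than of distinct values) is a fair point of care but does not change the argument, which matches the paper's own proof.
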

\begin{proof}
Given an instance $(z_1, \dots, z_M)$ of
PARTITION, let $x_i = z_i, y_i = 0$. This gives an instance of
LARGE-BISECTOR, which is a yes instance if and only if the original instance is a yes instance of PARTITION. Thus LARGE-BISECTOR is NP-hard. On the other hand, given a yes-instance $(x,y,M)$ of
LARGE-BISECTOR, the criterion in  Proposition \ref{bis1} gives a polynomial-sized certificate checkable in polynomial time, so LARGE-BISECTOR is in NP. Thus, LARGE-BISECTOR is NP-complete. 
 \end{proof}

\begin{remark}
We suspect the analogue of LARGE-BISECTOR to be NP-hard for every $\ell^1$-votewise metric. Presumably it is in NP for ``nice" distances, but of course there exist distances which cannot even be computed in polynomial time, so that an analogue of  Proposition \ref{bis1} may not exist.
\end{remark}

The question of large bisectors is quite subtle, because large bisectors do not occur when the consensus sets are hyperplanes instead of points.

\begin{prop} 
\label{prop:bis hyp ell one}
The bisector of two distinct hyperplanes under any norm on $\reals^M$ is contained in a union of at most two hyperplanes.
\end{prop}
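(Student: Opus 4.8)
The plan is to reduce the bisector equation to the vanishing of a product of two affine functions, after first establishing that the distance from a point to a hyperplane under any norm is the absolute value of an affine functional. First I would fix the norm $N$ on $\reals^M$ and write each hyperplane as $H_i = \{x : \alpha_i(x) = c_i\}$ for a nonzero linear functional $\alpha_i$ and a scalar $c_i$. Writing $H_i = h_i + W_i$ with $h_i \in H_i$ and $W_i = \ker\alpha_i$ the underlying linear hyperplane, the distance is $d(z, H_i) = \inf_{w \in W_i} N\bigl((z - h_i) - w\bigr)$, which is exactly the quotient norm of the class of $z - h_i$ in $\reals^M / W_i$. Since this quotient space is one-dimensional, its norm is a positive constant $\kappa_i$ times absolute value, and the quotient projection is the functional $\alpha_i$ up to scaling; hence $d(z, H_i) = \kappa_i\,|\alpha_i(z) - c_i|$ for a constant $\kappa_i > 0$ depending only on $N$ and $\alpha_i$ (concretely $\kappa_i = 1/N^*(\alpha_i)$, the reciprocal of the dual norm).

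Next, the bisector becomes $\bis(H_1, H_2) = \{z : \kappa_1|\alpha_1(z) - c_1| = \kappa_2|\alpha_2(z) - c_2|\}$. Setting $\phi_i(z) = \kappa_i(\alpha_i(z) - c_i)$, this is $\{z : |\phi_1(z)| = |\phi_2(z)|\}$, and since both sides are nonnegative, squaring yields the equivalent condition $\bigl(\phi_1(z) - \phi_2(z)\bigr)\bigl(\phi_1(z) + \phi_2(z)\bigr) = 0$. Thus the bisector is contained in the union of the two affine loci $\{z : \phi_1(z) = \phi_2(z)\}$ and $\{z : \phi_1(z) = -\phi_2(z)\}$, each cut out by a single affine equation.

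It then remains to check that each locus sits inside a hyperplane and to confirm the count of at most two. The equation $\phi_1(z) \mp \phi_2(z) = 0$ has linear part $\kappa_1\alpha_1 \mp \kappa_2\alpha_2$. When $H_1$ and $H_2$ are not parallel, $\alpha_1$ and $\alpha_2$ are linearly independent, so both combinations are nonzero functionals and each locus is a genuine hyperplane, giving the desired union of two. The case needing separate care is when the hyperplanes are parallel, say $\alpha_2 = \mu\alpha_1$; then one linear part may vanish, but using $\kappa_2|\mu| = \kappa_1$ the two distances collapse to $\kappa_1|\alpha_1(z) - c_1|$ and $\kappa_1|\alpha_1(z) - c_2'|$ with $c_2' = c_2/\mu$ and $c_1 \neq c_2'$ (as $H_1 \neq H_2$), whose equality locus is the single parallel midway hyperplane $\{z : \alpha_1(z) = (c_1 + c_2')/2\}$. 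In every case the bisector lies in at most two hyperplanes.

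The step I expect to be the crux is the distance-to-hyperplane formula, specifically the homogeneity that makes $d(z, H_i)$ exactly the absolute value of an affine functional rather than merely some convex piecewise expression; this is what the quotient-norm (equivalently dual-norm) computation supplies, and it is essentially the only place the norm axioms are used.
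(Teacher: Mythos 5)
Your proof is correct and follows essentially the same route as the paper: write the distance from a point to a hyperplane as the absolute value of an affine functional, $d(x,H)=|a^Tx-b|/N^*(a)$ (you derive this via the one-dimensional quotient norm; the paper simply cites it), and then split the equation $|\phi_1(z)|=|\phi_2(z)|$ into the two affine loci $\phi_1=\phi_2$ and $\phi_1=-\phi_2$. Your explicit handling of the parallel case is in fact slightly more careful than the paper's, whose assertion that ``each of these is the equation of a hyperplane'' fails when the normals coincide after normalization --- there one equation has empty solution set, which still leaves the bisector inside at most two hyperplanes, exactly as you observe.
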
 
\begin{proof} The distance from a point $x$ to a hyperplane $H$ defined by
$a^Tx=b$ is equal to $d(x,H)=\frac{|a^Tx-b|}{||a||^*}$ where $^*$ denotes the dual norm (see for example
\cite{Mang1997}; the exact definition is not necessary here). Now, let $H'$ be another hyperplane defined by the
equation $a'^Tx=b'$. We assume that $||a||^* = ||a'||^*$
(without loss of generality since multiplying by a scalar still defines
the same hyperplane). The bisector of $H$ and $H'$ can be defined as the
set of points $x$ satisfying $|a^Tx-b| = |a'^Tx-b'|$. 
So, we have two cases, depending on the sign of these absolute values: either
$\sum_i(a_i-a'_i)x_i= b-b'$ or $\sum_i(a_i+a'_i)x_i=b+b'$. Since $H\neq H'$, each of these is the 
equation of a hyperplane.
\end{proof}

\section{Small bisectors and hyperplane rules}
\label{s:small}

All our  results in this section show that the bisectors in question are
contained in a finite union of hyperplanes. Rules which have a
well-defined winner on each component of the complement in $\simp_M$ of a
finite set of hyperplanes have been studied recently. 
Mossel, Procaccia and
Racz \cite{MPR2013} call such simplex rules \hl{hyperplane rules} and show their
equivalence with the \hl{generalized scoring rules} of Xia and Conitzer
\cite{XiCo2008b}. These rules can be defined axiomatically using
\hl{finite local consistency} \cite{XiCo2009}. 
Although originally introduced for social choice rules only, the definition extends to social welfare rules \cite{CPS2014} and it is clear that it also extends to 
general $s$.

Most rules that have ever been studied by social choice theorists are
hyperplane rules. A notable exception is Copeland's rule. In order to
interpret Copeland's rule as a hyperplane rule, Mossel, Procaccia and Racz \cite{MPR2013} require that
the winner be (arbitrarily) specified on the tied region. This seems to
us to be stretching the definition too far -- we could do the same thing for any indecisive rule. 

We now give a sufficient condition for a DR rule to be a hyperplane rule. 

\begin{defn}
\label{def:HMP}
Let $\cons$ be a homogeneous consensus and $d$ a homogeneous distance. 

Say that $(\cons, d)$ satisfies the \hl{votewise minimizer property} (VMP) if the following condition is satisfied.

\if01
\begin{quotation}
For each  $r\in L_s(C)$ and each election $E = (C,V,\pi)\in \elecs$, there
exists a minimizer $(C,V,\pi^*)\in \cons_r$ of the distance from $E$ to
$\cons_r$, so that for all $i, d(\pi_i,\pi^*_i)$ depends only on $\pi_i$
and $r$.
\end{quotation}
\fi

\begin{quotation}
There is a mapping $\xi:L(C^*) \times L_s(C^*) \to L(C^*)$ such that 
for each  $r\in L_s(C^*)$ with $\cons_r\neq\emptyset$, and each election $E = (C,V,\pi)\in \elecs$,
 $m(E, r):=(C,V,\xi(\pi, r))$ minimizes the distance under $d$ from $E$ to
$\cons_r$, where $\xi(\pi, r):=(\xi(\pi_1, r), \dots,  \xi(\pi_n, r))$.
\end{quotation}
\end{defn}

\begin{remark}
The VMP allows us to find a minimizer by dealing with each voter separately, and if two voters have the same preference order in $E$ then the corresponding votes in $m(E, r)$ are equal. Also if $d$ is votewise based on $N$, then
$$d(E, \cons_r) = N(d(\pi_1, \xi(\pi_1, r)), \dots, d(\pi_n, \xi(\pi_n,r)).
$$
If $N$ is also symmetric then $d(E, \cons_r)$ depends only on the multiset of all values $d(\pi_i, \xi(\pi_i, r))$.
\end{remark}

\begin{prop} \label{prop:hyperplane}
Let $d$ be  $\ell^p$-votewise  for some $1\leq p < \infty$ and suppose
that $(\cons, d)$ satisfies the VMP. Then  on $\simp_M$, 
$ \bis(\cons_r, \cons_{r'})$ is defined by
$$
\sum_{t\in L(C)} x_t \delta(t,r)^p =  \sum_{t\in L(C)} x_t\delta(t, r')^p.
$$
\end{prop}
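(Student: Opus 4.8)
The plan is to reduce the bisector condition to an equality of the two DR scores raised to the power $p$, and then to evaluate each score explicitly using the votewise minimizer property. Throughout, write $\delta(t,r):=d(t,\xi(t,r))$ for the per-voter cost, under the underlying distance $d$, of moving a voter whose preference is $t\in L(C)$ to the target prescribed by the VMP for the ranking $r$. Since the rule on the simplex is $\overline{R}=\R(\overline{\cons},\overline{d_*^p})$, the score of $r$ at a point $x\in\simp_M$ is the quotient distance $\overline{d_*^p}(x,\overline{\cons}_r)$. These scores are nonnegative and $u\mapsto u^p$ is strictly increasing on $\preals$, so $x$ lies on $\bis(\cons_r,\cons_{r'})$ precisely when
\[
\overline{d_*^p}(x,\overline{\cons}_r)^p=\overline{d_*^p}(x,\overline{\cons}_{r'})^p .
\]
It therefore suffices to produce a clean formula for $\overline{d_*^p}(x,\overline{\cons}_r)^p$ for each target $r$.

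First I would compute this score at a rational point $x\in\simp^\rats_M$. Fix an election $E=(C,V,\pi)$ with $n=|V|$ voters and $\distmap(E)=x$, and let $n_t$ be the number of voters of type $t$, so that $x_t=n_t/n$. By Proposition~\ref{prop:homog simple} the quotient distance from $x$ to $\overline{\cons}_r$ is obtained by minimizing $d_*^p(E,E')$ over the elections $E'\in\cons_r$ with the same number of voters; by the VMP (Definition~\ref{def:HMP}) this minimum is attained at $m(E,r)=(C,V,\xi(\pi,r))$. Expanding the normalized $\ell^p$ votewise norm exactly as in the Wasserstein computation and grouping the voters by type gives
\[
\overline{d_*^p}(x,\overline{\cons}_r)^p=\frac{1}{n}\sum_{i=1}^{n} d(\pi_i,\xi(\pi_i,r))^p=\frac{1}{n}\sum_{t\in L(C)} n_t\,\delta(t,r)^p=\sum_{t\in L(C)} x_t\,\delta(t,r)^p .
\]
The normalization makes this manifestly independent of $n$, so it is a single affine function of $x$ that is valid for every rational representative.

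Substituting this formula, for both $r$ and $r'$, into the reduced bisector condition above yields directly
\[
\sum_{t\in L(C)} x_t\,\delta(t,r)^p=\sum_{t\in L(C)} x_t\,\delta(t,r')^p ,
\]
which is the asserted defining equation, so far on $\simp^\rats_M$. To finish I would extend from the rational points to all of $\simp_M$: both sides of this last display are affine (hence continuous) in $x$, and $\simp^\rats_M$ is dense in $\simp_M$, so the equation characterizes $\bis(\cons_r,\cons_{r'})$ on the entire simplex.

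I expect the genuine work to sit in the second step, namely in justifying that the per-voter VMP minimizer actually realizes the quotient distance to the whole \emph{set} $\overline{\cons}_r$, and not merely an upper bound for it; this is exactly what the combination of Proposition~\ref{prop:homog simple} with the VMP delivers. The accompanying technical point to watch is that the homogeneous normalization of the $\ell^p$ votewise norm must be used, so that the $p$-th power of the score reduces to the $n$-independent average $\tfrac1n\sum_i d(\pi_i,\xi(\pi_i,r))^p=\sum_t x_t\,\delta(t,r)^p$; otherwise a spurious power of $n$ would survive when $p>1$. Once the score formula is secured, the passage to the bisector equation and the density argument are routine.
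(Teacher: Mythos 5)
Your proof is correct and follows essentially the same route as the paper's: use the VMP minimizer $m(E,r)$, group the voters by preference type so that the $p$-th power of the score becomes $\sum_{t} x_t\,\delta(t,r)^p$ (up to a multiplicative power of $n$ that is common to both rankings and so cancels in the bisector equation), and equate the scores for $r$ and $r'$. The only additions are that you make explicit two points the paper leaves implicit --- the passage through the normalized/quotient distance via Proposition~\ref{prop:homog simple}, and the density-plus-continuity extension from $\simp^{\rats}_M$ to all of $\simp_M$ --- which is extra care within the same argument rather than a different one.
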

\begin{proof}
The distance between $E = (C,V, \pi)$ and
the minimizer $m(E, r) = (C,V, \pi^*)$ equals $N(\Sigma)$ where $\Sigma$ is the multiset with entries
$\delta(t, r)$ occurring according to their multiplicities $nx_t$,
for all $t \in L(C)$. The specific form of $N$ then shows that 
$d(E, m(E,r))^p = n\left(\sum_t x_t \delta(t, r)^p\right)$. 
Applying the same argument for $r'$ yields the result.
\end{proof}

\begin{defn}
\label{def:gen unam}
Suppose that the $s$-consensus $\cons$ satisfies the following: for each $r\in L_s(C)$, there is a subset $S_r$ of $L(C)$ such that $\cons_r$ consists precisely of those elections for which every voter has a ranking in $S_r$. Then we call $\cons$ a \hl{generalized unanimity} consensus.
\end{defn}

\begin{prop}
\label{prop:VMP}
Let $d$ be an $\ell^p$-votewise distance on $\elecs$ and let $\cons$ be a generalized unanimity consensus. Then $(\cons, d)$ satisfies the VMP.
\end{prop}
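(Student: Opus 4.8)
The plan is to exhibit the voter-by-voter minimizer map $\xi$ explicitly and then check that it satisfies the two demands of the VMP. The structural fact that makes everything work is that an $\ell^p$-votewise distance \emph{decouples across voters}: for elections $E = (C,V,\pi)$ and $E' = (C,V,\sigma)$ on a common voter set of size $n$ we have $d(E,E')^p = \sum_i d(\pi_i,\sigma_i)^p$ (for finite $p$; the $\ell^\infty$ case replaces the sum by a maximum), while $d(E,E') = \infty$ whenever the voter sets differ, since votewise distances are standard. Crucially, because $\cons$ is a generalized unanimity consensus, membership $E'\in\cons_r$ is \emph{also} a per-voter condition by Definition~\ref{def:gen unam}, namely $\sigma_i \in S_r$ for every $i$. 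Hence minimizing the distance from $E$ to $\cons_r$ splits into $n$ independent single-voter minimizations.

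First I would fix, once and for all, a linear order $\prec$ on the finite set $L(C)$ to serve as a tie-breaking convention. For each pair $(\rho, r)$ with $r\in L_s(C)$ and $\cons_r\neq\emptyset$ (equivalently $S_r\neq\emptyset$), I define $\xi(\rho,r)$ to be the $\prec$-least element of the finite set $\arg\min_{\sigma\in S_r} d(\rho,\sigma)$. This makes $\xi: L(C)\times L_s(C)\to L(C)$ a genuine single-valued function whose value depends only on $\rho$ and $r$ — exactly the feature the VMP requires, so that in particular two voters sharing a preference order are assigned the same image.

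Next I would verify the two claims for $m(E,r) = (C,V,\xi(\pi,r))$. For \emph{membership}, since $\xi(\pi_i,r)\in S_r$ for every $i$, all votes of $m(E,r)$ lie in $S_r$, so $m(E,r)\in\cons_r$ by Definition~\ref{def:gen unam}. For \emph{minimality}, any competitor $E'=(C,V,\sigma)\in\cons_r$ has $\sigma_i\in S_r$, whence $d(\pi_i,\sigma_i)\geq d(\pi_i,\xi(\pi_i,r))$ for each $i$ by the defining property of $\xi$; since the $\ell^p$ norm is coordinatewise monotone on nonnegative vectors for every $1\leq p\leq\infty$, aggregating these inequalities yields $d(E,E')\geq d(E,m(E,r))$, so $m(E,r)$ attains the minimum distance from $E$ to $\cons_r$. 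Elections with a voter set other than $V$ are at distance $\infty$ and may be ignored, and the homogenizing factor $1/n$ of Remark~\ref{rem:normalize} is constant on the fixed voter set and hence does not affect which element is the minimizer.

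I do not expect a serious obstacle; the only points requiring care are bookkeeping ones. The main thing to get right is that $\xi$ must be a fixed function of $(\rho,r)$ alone — independent of the election, of the voter set, and of the other voters — which is precisely why the explicit tie-breaking order $\prec$ is introduced rather than an unspecified choice of minimizer. A secondary point, immediate but worth stating, is that the minimum defining $\xi$ is genuinely attained because $S_r\subseteq L(C)$ is finite, so no infimum-versus-minimum subtlety arises.
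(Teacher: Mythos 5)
Your proof is correct and follows essentially the same route as the paper's: define $\xi(\rho,r)$ voter-by-voter as a nearest element of $S_r$, check that the resulting election lies in $\cons_r$, and conclude minimality from the coordinatewise monotonicity of the $\ell^p$-norm on the positive orthant. Your explicit tie-breaking order $\prec$ is a slightly more careful rendering of the paper's ``make an arbitrary choice,'' since the VMP does require $\xi$ to be a fixed function of $(\rho,r)$ alone, but this is a refinement of the same argument, not a different one.
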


\begin{proof} If $\cons_r\neq \emptyset$, define $\xi(\pi_i)$ to be the closest element of $S_r$ to $\pi_i$ under the underlying distance on $L(C)$ (if there is more than one such element, make an  arbitrary choice). For  each $E = (C, V, \pi)$, the element $E^* = (C, V, \xi(\pi))$ belongs to $\cons_r$. If $F=(C, V, \pi')\in \cons_r$ then $d(E, F) \geq d(E, E^*)$ because $d(\pi_i, \xi(\pi_i)) \leq d(\pi_i, \pi'_i)$ for each $i$, and the $\ell^p$-norm is increasing in each argument in the positive orthant. Thus $E^*$ is the desired minimizer.
\end{proof}

\begin{cor}
\label{cor:unam VMP}
$(\sunam^s, d^p)$ satisfies the VMP for every distance $d$.
\end{cor}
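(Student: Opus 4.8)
The plan is to recognize that $\sunam^s$ is simply a special case of a generalized unanimity consensus, and then to invoke Proposition~\ref{prop:VMP} directly. No separate argument about minimizers is needed, since all the work has already been done in that proposition; the corollary is purely a matter of matching definitions.

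First I would exhibit the sets $S_r$ required by Definition~\ref{def:gen unam}. For each $s$-ranking $r \in L_s(C)$, set
$$
S_r := \{t \in L(C) : t_s = r\},
$$
the collection of full rankings whose top $s$ candidates are ordered according to $r$ (here $t_s$ denotes the initial $s$-ranking of $t$, as in the proof of Proposition~\ref{prop:$ell^p$}). By the description of $\sunam^s$ in Definition~\ref{def:cons_eg}, the set $\sunam^s_r$ consists exactly of those elections in which all voters agree that $r$ is the ranking of the top $s$ candidates; equivalently, every voter's preference order lies in $S_r$. This is precisely the defining condition of a generalized unanimity consensus, so $\sunam^s$ is one.

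Then I would apply Proposition~\ref{prop:VMP}. That proposition shows that $(\cons, d^p)$ satisfies the VMP whenever $d^p$ is an $\ell^p$-votewise distance and $\cons$ is a generalized unanimity consensus, and it places no restriction on the underlying distance $d$ on $L(C)$. Since $\sunam^s$ has just been identified as a generalized unanimity consensus, it follows at once that $(\sunam^s, d^p)$ satisfies the VMP for every distance $d$, as claimed.

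There is essentially no obstacle here: the entire content lies in reading off the sets $S_r$ from the definition of strong unanimity. The only point worth a word of care is that distinct $s$-rankings $r \neq r'$ yield disjoint $S_r$ and $S_{r'}$, so the consensus is genuinely single-valued and the construction of the voterwise minimizer $\xi(\pi_i)$ in Proposition~\ref{prop:VMP} is unambiguous once $r$ is fixed; but this is immediate.
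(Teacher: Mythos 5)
Your proof is correct and is essentially identical to the paper's: both take $S_r$ to be the set of full rankings agreeing with $r$ in their initial $s$-ranking, conclude that $\sunam^s$ is a generalized unanimity consensus, and then invoke Proposition~\ref{prop:VMP}. The extra remark on disjointness of the $S_r$ is a harmless addition; nothing further is needed.
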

\begin{proof}
We can take $S_r$ to be the set of rankings which agree with $r$ in their initial $s$-ranking, showing that $\sunam^s$ is a generalized unanimity consensus.
\end{proof}

\begin{eg}
\label{eg:VMP formula}
Let $\cons = \wunam$ and $d = d_K$, and $N = \ell^2$. For each $E=(C,V,\pi)\in \elecs$ and $a\in C$, we can take $\xi(\pi, a)$ to be the ranking derived from $\pi$ by swapping $a$ to the top as efficiently as possible in each $\pi_i$. 
Thus $d(E, \wunam_a)^2 = \sum_{t\in L(C)} n(t) (\xi(t,a)-1)^2$, where $n(t)$ is the number of times $t$ occurs in $\pi$.
\end{eg}

\begin{remark}
We do not know of any ``natural" consensus and distance which satisfy the VMP, apart from those already mentioned. We can easily create strange examples, however, by creating generalized unanimity consensuses. If for each candidate $a$ we choose a single ranking with $a$ at the top, this yields a generalized unanimity consensus that is extended by $\wunam$. Note that this consensus is not neutral. Alternatively, we could choose all rankings having $a$ in the first or second position (in which case $a$ is the consensus winner), or $b$ in the first position as long as $a$ is not in the second position (in which case $b$ is the consensus winner), or $c$ in the first or second position (provided $a$ is not first or second and $b$ is not first), in which case $c$ is the consensus winner. Again, this is not neutral. 
\end{remark}

\begin{cor}
\label{cor:hyperplane}
Suppose that $d$ is $\ell^p$-votewise with $1\leq p < \infty$, $d$ is finite and not identically zero, and $\cons$ is a generalized unanimity consensus. Then $\R(\cons, d)$ is a hyperplane rule. 
\end{cor}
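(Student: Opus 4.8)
The plan is to combine the three immediately preceding results. The key observation is that a hyperplane rule is one whose boundary is contained in a finite union of hyperplanes, so it suffices to show that the boundary of $\R(\cons, d)$ lies in such a union. Since the boundary is the union of all the bisectors $\bis(\cons_r, \cons_{r'})$ over distinct $r, r' \in L_s(C)$, and $L_s(C)$ is finite, it is enough to show that each individual bisector $\bis(\cons_r, \cons_{r'})$ is contained in a hyperplane (or a finite union of hyperplanes).

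First I would invoke Proposition~\ref{prop:VMP}: because $\cons$ is a generalized unanimity consensus and $d$ is $\ell^p$-votewise, the pair $(\cons, d)$ satisfies the VMP. This is the hypothesis needed to apply the explicit bisector formula. Next, since $d$ is $\ell^p$-votewise with $1 \leq p < \infty$ and the VMP holds, Proposition~\ref{prop:hyperplane} tells us that on $\simp_M$ the bisector $\bis(\cons_r, \cons_{r'})$ is exactly the solution set of
$$
\sum_{t\in L(C)} x_t\, \delta(t,r)^p = \sum_{t\in L(C)} x_t\, \delta(t,r')^p,
$$
which, after collecting terms, is a single linear equation $\sum_t x_t\bigl(\delta(t,r)^p - \delta(t,r')^p\bigr) = 0$ in the coordinates $x_t$. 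Hence each bisector lies in a hyperplane of $\reals^M$ (intersected with $\simp_M$). Taking the union over the finitely many pairs $(r, r')$ shows the boundary is contained in a finite union of hyperplanes, which is precisely the definition of a hyperplane rule recalled at the start of Section~\ref{s:small}.

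The only subtle point — and the step I would watch most carefully — is degeneracy: the linear equation above is vacuous (all of $\simp_M$) precisely when $\delta(t,r)^p = \delta(t,r')^p$ for every $t$, which would make the ``bisector'' all of the simplex rather than a genuine hyperplane. This is where the hypotheses that $d$ is finite and not identically zero enter. I would argue that these conditions (together with $\cons_r \neq \cons_{r'}$ coming from the distinguishing assumption of Definition~\ref{def:distinguish}, which is assumed throughout) prevent the coefficient vector $\bigl(\delta(t,r)^p - \delta(t,r')^p\bigr)_t$ from being identically zero, so each bisector is a bona fide hyperplane and no pair of distinct consensus choices collapses. Assembling these finitely many hyperplanes completes the proof. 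I expect this non-degeneracy check to be the main obstacle, as the rest is a direct chaining of the cited propositions.
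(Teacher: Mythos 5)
Your scaffolding matches the paper's proof exactly: invoke Proposition~\ref{prop:VMP} to obtain the VMP, then Proposition~\ref{prop:hyperplane} to write each bisector $\bis(\cons_r,\cons_{r'})$ as the zero set of the linear form $\sum_{t}x_t\bigl(\delta(t,r)^p-\delta(t,r')^p\bigr)$ (finiteness of $d$ is what lets one collect the two sides into a single equation), and finish by taking the union over the finitely many pairs $(r,r')$. However, the step you explicitly defer --- ruling out the degenerate case $\delta(t,r)^p=\delta(t,r')^p$ for all $t$ --- is not a routine side condition: it is the entire substance of the paper's proof of this corollary, and your proposal contains no argument for it, only the assertion that the hypotheses ``prevent'' degeneracy. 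As it stands this is a genuine gap, since everything else is a direct chaining of results already proved.

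Here is the missing argument, and why the hypotheses you name do not deliver it by themselves. The paper uses the combinatorial structure of a generalized unanimity consensus (Definition~\ref{def:gen unam}): there, $\delta(t,r)$ is the $d$-distance from the ranking $t$ to the defining set $S_r$, and the VMP minimizer leaves untouched every vote already lying in $S_r$. Since $\cons$ is single-valued, distinct $r\neq r'$ with nonempty consensus sets force $S_r\neq S_{r'}$; picking $t$ in the symmetric difference, without loss of generality $t\in S_r\setminus S_{r'}$, gives $\delta(t,r)=0$ while $\delta(t,r')\neq 0$, so the coefficient of $x_t$ is nonzero and the bisector is a bona fide hyperplane. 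Your appeal to Definition~\ref{def:distinguish} is not wrong, but it cannot be used as a black box: distinguishing is a statement about distances between consensus \emph{elections}, and to turn it into a statement about the coefficients $\delta(t,\cdot)$ you still need exactly this symmetric-difference step (for instance, if $\delta(t,r')=0$ for some $t\in S_r\setminus S_{r'}$, then some $t'\in S_{r'}$ has $d(t,t')=0$, and the two one-voter elections with rankings $t$ and $t'$ would violate distinguishing). Spell this out and your plan becomes the paper's proof; without it, the ``main obstacle'' you correctly identified is left unresolved.
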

\begin{proof} Since $d < \infty$ we may rearrange the formula in Proposition~\ref{prop:hyperplane} to get $\sum_t (\delta(t,r) - \delta(t,r')) = 0$. It suffices to show
that the linear function on the left side is not
identically zero. That could only happen if $\delta(t, r) = \delta(t,r')$
for all $t$. However, note that the distance from $x$ to $\cons_r$ is attained at a
point $m(x,r)$ where $m(x,r)_t = x_t$ for all $t\not\in S$, and $d(x,\cons_r) = \sum_{t\in S} x_t \delta(t,r)^p$. If $r\neq r'$ then by
definition $S_r \neq S_{r'}$. Thus taking $t\not\in S\cap S'$, without
loss of generality $\delta(t,r) = 0$ and $\delta(t, r') \neq 0$. 
\end{proof}

\begin{cor}
\label{cor:unam hyper}
Every rule of the form $\R(\sunam^s, d^p)$, where $1\leq p < \infty$ and $d$ is a distance on $L(C)$ that is neither infinite nor identically zero, is a hyperplane rule.
\end{cor}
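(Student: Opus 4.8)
The plan is to obtain this corollary as an immediate specialization of Corollary~\ref{cor:hyperplane}. That corollary already asserts that $\R(\cons, d)$ is a hyperplane rule whenever $d$ is $\ell^p$-votewise with $1\leq p<\infty$, is finite and not identically zero, and $\cons$ is a generalized unanimity consensus. So the entire task reduces to verifying that the three hypotheses hold in the present, more specific setting.

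First I would recall that $\sunam^s$ is a generalized unanimity consensus. This is exactly the observation used in the proof of Corollary~\ref{cor:unam VMP}: for each $r\in L_s(C)$ we set $S_r$ to be the set of full rankings in $L(C)$ whose initial $s$-ranking equals $r$, so that $\sunam^s_r$ consists precisely of the elections in which every voter has a ranking lying in $S_r$. This matches Definition~\ref{def:gen unam} verbatim, so the consensus hypothesis of Corollary~\ref{cor:hyperplane} is satisfied with $\cons = \sunam^s$.

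Second I would check that $d^p$ satisfies the distance hypotheses. By the notational convention $d^p = d^{\ell^p}$, so $d^p$ is $\ell^p$-votewise by construction, and the range $1\leq p<\infty$ is assumed. It then remains to transfer the two triviality conditions from the ground distance $d$ on $L(C)$ to the induced votewise distance. Since $d$ is finite on $L(C)$, the value $N_n(d(\pi_1,\sigma_1),\dots,d(\pi_n,\sigma_n))$ is finite for any two elections sharing the same voter and candidate sets, which is exactly what ``finite'' means for a standard votewise distance (it is $\infty$ across differing voter sets by definition, and that plays no role in Corollary~\ref{cor:hyperplane}). Likewise, because $d$ is not identically zero there exist $\rho\neq\rho'$ in $L(C)$ with $d(\rho,\rho')>0$, and a profile pair realizing this gives a positive value of $d^p$, so $d^p$ is not identically zero.

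With all hypotheses of Corollary~\ref{cor:hyperplane} verified, that corollary applies directly and yields that $\R(\sunam^s, d^p)$ is a hyperplane rule. The only point requiring any care is the bookkeeping in the previous paragraph, namely interpreting ``finite'' for the standard votewise distance on each fixed-voter-set fiber rather than globally; beyond that there is no real obstacle, as the corollary is a clean instance of the preceding general result.
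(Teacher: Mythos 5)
Your proof is correct and takes exactly the route the paper intends: the paper states Corollary~\ref{cor:unam hyper} without an explicit proof, as an immediate specialization of Corollary~\ref{cor:hyperplane}, using the observation (already recorded in the proof of Corollary~\ref{cor:unam VMP}) that $\sunam^s$ is a generalized unanimity consensus with $S_r$ the set of rankings whose initial $s$-ranking is $r$. Your additional bookkeeping about what ``finite'' means for the standard votewise distance $d^p$ (finiteness on each fixed-voter-set fiber, with the $\infty$ values across differing voter sets playing no role) is a sensible clarification of the same argument, not a different one.
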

\begin{remark}
This result does not extend to general distances. For example,
Copeland's rule as we have defined it is not a hyperplane rule, yet it can be defined as $\R(\wunam,
d_{RT})$.  Also note that when $p=\infty$, we do not obtain a hyperplane rule. For example, every point $x\in \simp^{\rats}_M$
for which every coordinate is nonzero is equidistant from all $\sunam_r$, so $\R(\sunam, d^\infty)$ is almost maximally 
indecisive. 
\end{remark}

\begin{remark}
Rules of the type described in Proposition~\ref{prop:hyperplane} are rather special. Since the distance to $\cons_r$ is of the form $\sum_t x_t \delta(t,r)^p$, each can be thought of as a differently weighted version of the rule with $p=1$. 
\end{remark}

\section{Discussion and future work}
\label{s:future}

We now summarize what we have learned about the boundary of a DR simplex rule. 
\begin{itemize}
\item Using a pseudometric that is not a metric can easily lead to a large boundary.
\item Large bisectors can occur even with $\ell^2$, if  consensus sets are not separated.
\item Large bisectors can occur with $\ell^1$-votewise rules, even for consensus sets that are isolated points, and it can be difficult to determine whether they do occur.
\item Even when bisectors are large in the ambient space, using consensus sets on the boundary of the simplex often yields small bisectors on the simplex.
\item Even when bisectors are large on the simplex, neutrality often makes 
the boundary of the rule small.
\end{itemize}

We have seen some desirable properties of consensus sets, such as homogeneity and neutrality.
We argue that convexity (defined in the usual way via restriction from $\reals^M$)  of each $\cons_r$ 
is another essential condition. In the following example, it seems ridiculous that $a$ should win at the 
extra point. 
\begin{eg}
\label{eg:nonconvex}
Consider the case $m=3$, and the consensus formed by extending $\wunam$
so that $a$ is the winner whenever $x_{bca} = x_{cba} = 1/2$ (and
similarly for $b, c$). This consensus is anonymous and homogeneous, 
but $\cons_a, \cons_b, \cons_c$ are not convex. 
\end{eg}

\begin{remark}In the simplex model, convexity (over $\rats$) is equivalent to the notion of \hl{consistency}: if we split the voter set into two parts each of which
elects $r$, the original voter set should elect $r$. It rules out the
above example. Note that $\cond$ and $\sunam^s$ are convex. In fact we do not know of a consensus that has been used in the literature that is not convex.
\end{remark}

Based on the above results, we suggest that the following criteria be required of consensus classes in the simplex (anonymity and homogeneity come for free)
\begin{itemize}
\item neutrality
\item convexity
\item separation
\item intersecting the boundary of the simplex
\end{itemize}
while distances should be required to be metrics. 

Note that the separation requirement rules out $\cond$ as a consensus notion. This may of course be somewhat controversial. It may turn out that neutral rules based on $\cond$ and using metrics always have small boundary (we do not know of a counterexample, but have no proof yet). However, it seems strange to consider a situation arbitrarily close to a complete tie among all rankings (the centre of the simplex) to be an election on which a  ``consensus" can be formed.

We saw above that $\ell^1$ votewise distances can lead to major problems with decisiveness. However 
there are many natural examples of $\ell^1$ votewise distances, as we have seen. We do not know of any ``natural" simplex rule satisfying the above requirements for which the boundary is large. However, not all obvious rules have been thoroughly explored.

Systematic exploration of rules $\R(\cons, d)$, where $\cons$ and $d$ satisfy the recommendations above, may prove fruitful in finding new rules with desirable properties. For example, by the results in this paper and \cite{HaWi2016}, the rules $\R(\sunam^s, d^p)$ where $d$ is a neutral metric on permutations, are anonymous, homogeneous, neutral, continuous, hyperplane rules. There are many neutral (also called right-invariant) metrics on permutations we have not discussed here, such as the $\ell^q$-metrics (the cases $q=1,2,\infty$ being called Spearman's footrule, Spearman's rank correlation and the maximum displacement distance), and the Lee distance \cite{DeDe2009}.
Even the rules $\R(\wunam, d^p)$ and $\R(\sunam, d^p)$  have not been fully explored, to our knowledge. 

Even less understood are rules of the form $\R(\cond, \ell^p)$. For example, when $p=1$, we obtain a homogeneous version of the recently described \emph{Voter Replacement Rule} \cite{EFS2012}. Little is known about the Voter Replacement Rule other than that it is not homogeneous \cite{HaWi2016}. 

Beyond the realm of votewise and $\ell^p$ distances, we have already mentioned more general statistical distances. Finally, rules involving various matrix norms on the tournament matrices have not been well studied.

Distance-based aggregation of preferences is a more general procedure than we have studied here: it could be applied with many different input and output spaces \cite{Zwic2014}. If the input consists of the tournament matrix rather than 
the profile, there is a natural hypercube representation of the input in $\binom{m}{2}$ dimensions. Saari \& Merlin \cite{SaMe2000} showed that the Kemeny rule can be 
described in this way using distance rationalization with respect to the $\ell^1$ norm and $\sunam$. This is the same as using an elementwise norm on the weighted tournament matrix, in our framework. When using profiles as input, the simplex geometry is hard enough to visualize that some authors have used a fixed projection to the \emph{permutahedron} and essentially used $\sunam$ as a consensus. The cases $p=2$  (\emph{mean proximity rules}) \cite{Zwic2008, LaSh2014} and  $p=1$ (\emph{mediancenter rules}) \cite{CDGL+2012} have received attention. These can be interpreted in our framework by changing the distance --- detailed formulae might be interesting.

A question which partially motivated the present work remains unanswered. Does (a homogenization of) Dodgson's rule have a ``small and nice" boundary? What about other Condorcet rules $\R(\cond, d)$ where $d$ is a votewise metric, or even rules based on $d_T$, such as the maximin rule?

\bibliography{voting}{}
\bibliographystyle{plain}
\end{document}